\documentclass[aps,pra,10pt,twocolumn,superscriptaddress]{revtex4-2}

\usepackage{amsfonts,amsmath,amssymb,amsthm,mathtools}
\usepackage{braket}
\usepackage{dcolumn}
\usepackage{graphicx}
\usepackage{tikz}
\usepackage{xcolor}
\usepackage{placeins}
\usepackage[colorlinks=true, linkcolor=blue, citecolor=blue, urlcolor=blue]{hyperref}
\IfFileExists{orcidlink.sty}{\usepackage{orcidlink}}{%
  \newcommand{\orcidlink}[1]{\href{https://orcid.org/##1}{\textsuperscript{ORCID}}}}
\usepackage[capitalize]{cleveref}

\crefname{equation}{Eq.}{Eqs.}
\Crefname{equation}{Eq.}{Eqs.}
\crefname{section}{Sec.}{Secs.}
\Crefname{section}{Sec.}{Secs.}
\crefname{subsection}{Sec.}{Secs.}
\Crefname{subsection}{Sec.}{Secs.}
\crefname{subsubsection}{Sec.}{Secs.}
\Crefname{subsubsection}{Sec.}{Secs.}
\crefname{figure}{Fig.}{Figs.}
\Crefname{figure}{Fig.}{Figs.}
\crefname{appendix}{Appendix}{Appendices}
\Crefname{appendix}{Appendix}{Appendices}
\crefname{table}{Table}{Tables}
\Crefname{table}{Table}{Tables}
\crefname{proposition}{Proposition}{Propositions}
\Crefname{proposition}{Proposition}{Propositions}
\crefname{theorem}{Theorem}{Theorems}
\Crefname{theorem}{Theorem}{Theorems}
\crefname{corollary}{Corollary}{Corollaries}
\Crefname{corollary}{Corollary}{Corollaries}

\usetikzlibrary{positioning,shapes.geometric}

\newtheorem{proposition}{Proposition}
\newtheorem{theorem}[proposition]{Theorem}
\newtheorem{corollary}[proposition]{Corollary}

\newcommand{\C}{\mathbf{C}}
\newcommand{\Np}{\mathbf{N}_{\geq 0}}

\definecolor{mblue}  {rgb}{0.368417, 0.506779, 0.709798}
\definecolor{morange}{rgb}{0.880722, 0.611041, 0.142051}
\definecolor{mgreen} {rgb}{0.560181, 0.691569, 0.194885}
\definecolor{mred}   {rgb}{0.922526, 0.385626, 0.209179}
\definecolor{mpurple}{rgb}{0.647624, 0.37816,  0.614037}
\definecolor{mcyan}  {rgb}{0.363898, 0.618501, 0.782349}

\tikzset{mps node/.style={circle, draw, inner sep=1pt, minimum size=1.1cm,fill=morange!75}}
\tikzset{copy/.style={circle,
            inner sep=1pt,
            minimum size=0.2cm,
            fill=black}}
\tikzset{
    left/.style={regular polygon, regular polygon sides=3,
    draw, fill=morange!75,
    minimum size = 1cm,
    inner sep = 0pt,
    shape border rotate=90,}}
\tikzset{
    right/.style={regular polygon, regular polygon sides=3,
    draw, fill=morange!75,
    minimum size = 1cm,
    inner sep = 0pt,
    shape border rotate=270,}
    }
\tikzset{
    mpo node/.style={
    circle,
    draw,
    inner sep=1pt,
    minimum size=1.1cm,
    fill=mblue!75}
    }
\tikzset{
    basis change/.style={
    draw,
    inner sep=1pt,
    minimum size=1cm,
    fill=mpurple!75}
    }

\begin{document}

\title{Basis-update and Galerkin time integration in canonical matrix-product-state form}
\author{Maximilian Fröhlich\,\orcidlink{0009-0007-5276-2858}}
\affiliation{Weierstrass Institute for Applied Analysis and Stochastics, Berlin, Germany}
\author{Richard M.~Milbradt \orcidlink{0000-0001-8630-9356}}
\affiliation{Technical University of Munich, Munich, Germany}
\author{Martin Eigel\,\orcidlink{0000-0003-2687-4497}}
\affiliation{Weierstrass Institute for Applied Analysis and Stochastics, Berlin, Germany}
\author{Aaron Sander\,\orcidlink{0009-0007-9166-6113}}
\affiliation{Technical University of Munich, Munich, Germany}
\author{Robert Wille\,\orcidlink{0000-0002-4993-7860}}
\affiliation{Technical University of Munich, Munich, Germany}
\affiliation{MQSC GmbH, Munich, Germany}
\affiliation{Software Competence Center Hagenberg GmbH (SCCH), Hagenberg, Austria}

\author{Christian B.~Mendl~\orcidlink{0000-0002-6386-0230}}
\affiliation{Technical University of Munich, Munich, Germany}

\begin{abstract}
Matrix product state algorithms must enlarge their bond spaces as entanglement grows and compress them to control cost. We formulate basis-update and Galerkin (BUG) time integration as a sequence of canonical MPS sweeps for Hamiltonians represented as matrix product operators. We show when two natural basis updates produce the same trial space and when transporting coefficients between successive bases preserves the represented state. Under these conditions, the existing first-order error bound for uncompressed tree-tensor-network BUG also applies to the alternating-endpoint MPS schedule. We verify the uncompressed implementation against an independent six-site calculation. We then compare BUG with two-site TDVP for 16-site transverse-field Ising and Haldane–Shastry dynamics. At matched timestep and truncation settings, BUG performs fewer local exponential actions and has lower runtime. These settings do not produce equal accuracy. The runtime versus accuracy curves cross for the Ising model and are close for the Haldane–Shastry model. The comparison therefore identifies model-dependent trade-offs rather than a general advantage for either method.
\end{abstract}


\maketitle

\section{Introduction}
Classical simulation helps characterize quantum many-body dynamics and assess emerging quantum devices.  The many-body Hilbert space grows exponentially with system size, and real-time evolution becomes harder as entanglement develops.  Matrix product states (MPS) reduce this cost when the state remains sufficiently compressible.  Their practical reach therefore depends on whether the virtual bond spaces can grow to represent new correlations and then be compressed without losing essential information.

Established time-evolution methods handle bond growth in different ways.  Time-evolving block decimation is particularly effective when the Hamiltonian has a local gate decomposition \cite{Vidal2003,Vidal2004}.  Methods based on the time-dependent variational principle (TDVP) apply directly to general matrix product operator (MPO) Hamiltonians \cite{Paeckel2019, Haegeman_2011}.  Rank-adaptive schemes such as controlled bond expansion enlarge the available MPS manifold as entanglement grows \cite{Li_2024}.  Each approach distributes time-discretization, projection, and compression errors differently.  A basis-update method must in addition enlarge the basis without losing directions already used to represent the state.

The basis-update and Galerkin (BUG) method was introduced for dynamical low-rank matrices \cite{Ceruti2022}, extended to Tucker tensors \cite{Ceruti2022B}, and later formulated for tree tensor networks (TTNs) \cite{Ceruti2023,Sulz2025}.  Parallel TTN variants have also been developed \cite{Ceruti2024ParallelBUGTTN}.  BUG enlarges the approximation space before applying a Galerkin update.  Its analysis relates the spaces before and after this enlargement.  Because tensor trains and MPS are chain-shaped TTNs, these results establish that BUG applies to MPS in principle.

MPS-based implementations have since made different choices within this general construction.  One keeps an orthogonality center near the middle of the chain, updates both sides toward it, and then evolves the center before compressing the state \cite{petersson2026dynamicalsimulationsschrodingersequation}.  Another gives a parallel MPS/MPO realization for hybrid tensor networks and explicitly maps between updated bases \cite{bauer2026timeevolutionhybridtensor}.  These formulations show that BUG can be organized in more than one way in canonical MPS form.

The abstract TTN formulation does not by itself fix every operation in a canonical MPS implementation.  The implementation must choose where to perform the final projected update, how to enlarge each bond basis, how to project the coefficients into the updated basis, and when to compress.  These choices determine whether the updated representation retains the previous basis directions and whether moving the coefficients between bases changes the represented state.

Here, we formulate a sequential canonical-MPS realization of BUG for time-independent Hermitian MPO Hamiltonians.  Each sweep builds updated bond bases toward a fixed endpoint and performs the final projected evolution there.  We compare enlarging the basis with the current center tensor against retaining the previous basis explicitly.  Standard rank-adaptive BUG includes the old basis precisely so that the projected starting tensor is unchanged \cite{Ceruti2022B,Ceruti2023}.  We make explicit, for the center-augmented MPS variant, the full-rank condition under which the two choices generate the same space before compression.  We also specialize the Pythagorean identity for orthogonal projection to coefficient transport between the previous and updated block bases, quantifying when the represented block is preserved and what is lost otherwise.  Under these inclusion conditions, we transfer the published robust first-order TTN-BUG error bound to the uncompressed alternating-endpoint MPS schedule.

An independent six-site calculation then checks the complete uncompressed sweep.  We compare the resulting rank-adaptive method with two-site TDVP (2-TDVP) for 16-site transverse-field Ising and periodic Haldane-Shastry dynamics.  To isolate the algorithms, both methods start from the same MPS and use identical MPO tensors, truncation parameters, bond caps, and Krylov solver.  BUG performs fewer local matrix exponentials and is faster at every matched setting tested.  The accuracy comparison is, however, more nuanced, in that 2-TDVP is more accurate for the Ising calculations, while the long-range result depends on the timestep and tolerance.  We therefore report runtime-infidelity trade-offs for all test cases rather than asserting a universal advantage.  

The remainder of this paper is organized as follows. \Cref{sec:mps_mpo} introduces the MPS and MPO notation used throughout the paper. \Cref{sec:BUG} presents the endpoint-rooted BUG sweep, compares the two basis-enlargement choices, and analyzes coefficient transport and compression. \Cref{sec:numerical_results} first checks the uncompressed implementation against independent dense dynamics and then compares rank-adaptive BUG with 2-TDVP for the transverse-field Ising and Haldane--Shastry models. \Cref{sec:discussion} summarizes the results, discusses their scope, and identifies directions for further study. An open-source implementation of the methods used in this work is available in Yet Another Quantum Simulator (YAQS), part of the Munich Quantum Toolkit (MQT) \cite{YAQS,wille_mqt2024}.
 
\section{Matrix product states and operators} \label{sec:mps_mpo}
This section establishes the MPS, canonical-form, and MPO conventions needed for the construction below.  \Cref{fig:mps_mpo} summarizes these representations and the isometric contractions used throughout the algorithm.

\begin{figure}[t]
    \centering
    \begin{tikzpicture}
        \def\separ{1.7}
        \def\physsep{1.25}
        \node[font=\bfseries] at (-0.55,0.85*\physsep) {(a)};
        \foreach \i in {0,1,2.5,3.5} {
            \draw (\i*\separ,0) -- (\i*\separ,\physsep);
        }
        \node[mps node] (T1) at (0,0) {$T^{[1]}$};
        \node[mps node] (T2) at (\separ,0) {$T^{[2]}$};
        \node[mps node] (TL1) at (2.5*\separ,0) {$T^{[L-1]}$};
        \node[mps node] (TL) at (3.5*\separ,0) {$T^{[L]}$};
        \draw (T1) -- (T2) -- (1.5*\separ,0);
        \draw[dashed] (1.5*\separ,0) -- (2*\separ,0);
        \draw (2*\separ,0) -- (TL1) -- (TL);

        \begin{scope}[shift={(0,-3.1)}]
            \def\horzsep{1}
            \def\vertsep{1.6}
            \node[font=\bfseries] at (-0.55,1.25*\vertsep) {(b)};
            \node[right] (CT) at (0,0) {$T^{[j]}$};
            \node[right] (CTs) at (0,\vertsep) {$\Bar{T}^{[j]}$};
            \draw (CT) to[out=180,in=-90] (-0.75*\horzsep,0.5*\vertsep) to[out=90,in=180] (CTs);
            \draw (\horzsep,0) -- (CT) -- (CTs) -- (\horzsep,\vertsep);
            \node[anchor=west] at (\horzsep,0.5*\vertsep) {=};
            \draw (2*\horzsep,\vertsep) to[out=180,in=90] (1.5*\horzsep,0.5*\vertsep) to[out=-90,in=180] (2*\horzsep,0);
            \node at (2.7,0.5*\vertsep){and};
            \begin{scope}[shift={(4.2,0)}]
                \node[left] (CT) at (0,0) {$T^{[j]}$};
                \node[left] (CTs) at (0,\vertsep) {$\Bar{T}^{[j]}$};
                \draw (CT) to[out=0,in=-90] (0.75*\horzsep,0.5*\vertsep) to[out=90,in=0] (CTs);
                \draw (-\horzsep,0) -- (CT) -- (CTs) -- (-\horzsep,\vertsep);
                \node[anchor=west] at (\horzsep,0.5*\vertsep) {=};
                \draw (1.5*\horzsep,\vertsep) to[out=0,in=90] (2*\horzsep,0.5*\vertsep) to[out=-90,in=0] (1.5*\horzsep,0);
            \end{scope}
        \end{scope}

        \begin{scope}[shift={(0,-5.5)}]
            \node[font=\bfseries] at (-0.55,0.8) {(c)};
            \foreach \i in {0,1,2.5,3.5} {
                \draw (\i*\separ,-0.7) -- (\i*\separ,0.7);
            }
            \node[mpo node] (H1) at (0,0) {$H^{[1]}$};
            \node[mpo node] (H2) at (\separ,0) {$H^{[2]}$};
            \node[mpo node] (HL1) at (2.5*\separ,0) {$H^{[L-1]}$};
            \node[mpo node] (HL) at (3.5*\separ,0) {$H^{[L]}$};
            \draw (H1) -- (H2) -- (1.5*\separ,0);
            \draw[dashed] (1.5*\separ,0) -- (2*\separ,0);
            \draw (2*\separ,0) -- (HL1) -- (HL);
        \end{scope}
    \end{tikzpicture}
    \caption{\textbf{Matrix product state representations.} These diagrams define the tensor conventions used in the canonical BUG construction.  \textbf{(a)} An MPS represents the state as a chain of site tensors.  \textbf{(b)} Left- and right-isometric contractions define canonical form and identify the orthogonality center.  \textbf{(c)} An MPO represents the Hamiltonian in the same chain geometry.  The canonical isometries make the block embeddings orthonormal and allow the environments and basis updates below to be written as local contractions.}
    \label{fig:mps_mpo}
\end{figure}

Broader accounts of MPS time evolution and tensor networks are available in Refs.~\cite{Paeckel2019,Schollwock2011,Or_s_2019}.  Matrix product states efficiently represent many-body states with limited entanglement \cite{Schollwock2011}.  For local dimension $d\in\Np$ and system size $L\in\Np$, any state $\ket\psi\in(\C^d)^{\otimes L}$ has an MPS representation
\begin{equation}\label{eq:mps_symb}
\begin{split}
    \ket{\psi} = \sum_{\substack{\{\nu_i\}_{i=1}^{L-1}\\
                                 \{p_i\}_{i=1}^{L}}}
                                    T^{[1]}_{p_1\nu_1} T^{[2]}_{\nu_1p_2\nu_2} &\cdots T^{[L-1]}_{\nu_{L-2}p_{L-1}\nu_{L-1}} T^{[L]}_{\nu_{L-1}p_L} \\ & \cdot \ket{p_1,\dots,p_L},
\end{split}
\end{equation}
where $T^{[i]}\in\C^{\chi_{i-1}\times d\times\chi_i}$ is a site tensor, $p_i\in\{1,\ldots,d\}$ is a physical index, and $\nu_i\in\{1,\ldots,\chi_i\}$ is a virtual index.  The open-boundary dimensions are $\chi_0=\chi_L=1$, and $\chi_i$ is the dimension of bond $(i,i+1)$.  The MPS is shown in \cref{fig:mps_mpo}(a).
An MPS is in canonical form with respect to site $i$ if
\begin{align}
    \sum_{\nu_{j-1} p_j} T^{[j]}_{\nu_{j-1}p_j \nu_j}\Bar{T}^{[j]}_{\nu_{j-1}p_j \nu_j'} &= \delta_{\nu_j \nu_j'} & \, \forall \,  j < i \label{eq:lc_form_index} \\
    \sum_{p_j \nu_j}T^{[j]}_{\nu_{j-1}p_j \nu_j}\Bar{T}^{[j]}_{\nu_{j-1}'p_j \nu_j} &= \delta_{\nu_{j-1} \nu_{j-1}'} & \, \forall \,  j > i, \label{eq:rc_form_index}
\end{align}
where $\delta$ denotes the Kronecker delta.  This is also called $i$-site canonical form.  The first condition applies to $j<i$, and the second applies to $j>i$; the contractions are shown in \cref{fig:mps_mpo}(b).  Site $i$ is the orthogonality center.  We denote tensors satisfying the first condition by $Q_R^{[j]}$ and those satisfying the second by $Q_L^{[j]}$.  QR or LQ factorizations move the center without changing the physical state.  Besides improving numerical conditioning, this gauge makes the block-state embeddings surrounding the center isometric.

A matrix product operator (MPO) represents a many-body operator as
\begin{equation}\label{eq:mpo_symb}
\begin{split}
    \hat{O} = \sum_{\substack{\{\mu_i\}_{i=1}^{L-1}\\
                                 \{q_i, p_i\}_{i=1}^{L}}}
                                    H^{[1]}_{q_1p_1\mu_1} & H^{[2]}_{\mu_1q_2p_2\mu_2} \cdots H^{[L]}_{\mu_{L-1}q_Lp_L} \\ & \cdot \ket{q_1,\dots,q_L}\bra{p_1,\dots,p_L},
\end{split}
\end{equation}
The corresponding tensor network is shown in \cref{fig:mps_mpo}(c).

Several constructions convert a many-body operator to MPO form \cite{Schollwock2011,Hubig_2017,Fr_wis_2010,Cakir2025}.  Here every MPO represents a Hamiltonian.  To construct the local equations used in a canonical sweep, the global MPO contraction must be reduced to objects associated with one site.  \Cref{fig:contractions} shows this reduction from the full expectation value to left and right environments.

\begin{figure}[t]
    \centering
    \begin{tikzpicture}
        \def\separ{1.7}
        \def\physsep{1.6}
        \node[font=\bfseries] at (-0.55,2.6*\physsep) {(a)};
        \node[mps node] (T1) at (0,0) {$T^{[1]}$};
        \node[mps node] (T2) at (\separ,0) {$T^{[2]}$};
        \node[mps node] (TL1) at (2.5*\separ,0) {$T^{[L-1]}$};
        \node[mps node] (TL) at (3.5*\separ,0) {$T^{[L]}$};
        \draw (T1) -- (T2) -- (1.5*\separ,0);
        \draw[dashed] (1.5*\separ,0) -- (2*\separ,0);
        \draw (2*\separ,0) -- (TL1) -- (TL);
        \begin{scope}[shift={(0,\physsep)}]
            \node[mpo node] (O1) at (0,0) {$H^{[1]}$};
            \node[mpo node] (O2) at (\separ,0) {$H^{[2]}$};
            \node[mpo node] (OL1) at (2.5*\separ,0) {$H^{[L-1]}$};
            \node[mpo node] (OL) at (3.5*\separ,0) {$H^{[L]}$};
            \draw (O1) -- (O2) -- (1.5*\separ,0);
            \draw[dashed] (1.5*\separ,0) -- (2*\separ,0);
            \draw (2*\separ,0) -- (OL1) -- (OL);
        \end{scope}
        \begin{scope}[shift={(0,2*\physsep)}]
            \node[mps node] (Ts1) at (0,0) {$\Bar{T}^{[1]}$};
            \node[mps node] (Ts2) at (\separ,0) {$\Bar{T}^{[2]}$};
            \node[mps node] (TsL1) at (2.5*\separ,0) {$\Bar{T}^{[L-1]}$};
            \node[mps node] (TsL) at (3.5*\separ,0) {$\Bar{T}^{[L]}$};
            \draw (Ts1) -- (Ts2) -- (1.5*\separ,0);
            \draw[dashed] (1.5*\separ,0) -- (2*\separ,0);
            \draw (2*\separ,0) -- (TsL1) -- (TsL);
        \end{scope}
        \draw (T1) -- (O1) -- (Ts1);
        \draw (T2) -- (O2) -- (Ts2);
        \draw (TL1) -- (OL1) -- (TsL1);
        \draw (TL) -- (OL) -- (TsL);
    \end{tikzpicture}
    \vspace{0.7em}

    \begin{tikzpicture}
        \def\separ{2.2}
        \def\physsep{1.6}

        \node[font=\bfseries] at (-0.5,1.5*\physsep) {(b)};
        
        \node[mps node] (T1) at (0,-1*\physsep){$T^{[1]}$};
        \node[mpo node] (O1) at (0,0){$H^{[1]}$};
        \node[mps node] (Ts1) at (0,\physsep){$\Bar{T}^{[1]}$};
        \draw (T1) -- (O1) -- (Ts1);
        \node[mps node] (Tj) at (\separ,-1*\physsep){$T^{[i-1]}$};
        \node[mpo node] (Oj) at (\separ,0){$H^{[i-1]}$};
        \node[mps node] (Tsj) at (\separ,\physsep){$\Bar{T}^{[i-1]}$};
        \draw (Tj) -- (Oj) -- (Tsj);
        \draw (T1) -- (0.4*\separ,-1*\physsep);
        \draw (Tj) -- (0.6*\separ,-1*\physsep);
        \draw[dashed] (0.4*\separ,-1*\physsep) -- (0.6*\separ,-1*\physsep);
        \draw (O1) -- (0.4*\separ,0);
        \draw (Oj) -- (0.6*\separ,0);
        \draw[dashed] (0.4*\separ,0) -- (0.6*\separ,0);
        \draw (Ts1) -- (0.4*\separ,\physsep);
        \draw (Tsj) -- (0.6*\separ,\physsep);
        \draw[dashed] (0.4*\separ,\physsep) -- (0.6*\separ,\physsep);
        \draw (Tj) -- (1.5*\separ,-1*\physsep);
        \draw (Oj) -- (1.5*\separ,0);
        \draw (Tsj) -- (1.5*\separ,\physsep);

        \begin{scope}[shift={(2.25*\separ,0)}]
            \node[font=\bfseries] at (-0.5*\separ,1.5*\physsep) {(c)};
            \node[mps node] (T1) at (0,-1*\physsep){$T^{[i+1]}$};
            \node[mpo node] (O1) at (0,0){$H^{[i+1]}$};
            \node[mps node] (Ts1) at (0,\physsep){$\Bar{T}^{[i+1]}$};
            \draw (T1) -- (O1) -- (Ts1);
            \node[mps node] (Tj) at (\separ,-1*\physsep){$T^{[L]}$};
            \node[mpo node] (Oj) at (\separ,0){$H^{[L]}$};
            \node[mps node] (Tsj) at (\separ,\physsep){$\Bar{T}^{[L]}$};
            \draw (Tj) -- (Oj) -- (Tsj);
            \draw (T1) -- (0.4*\separ,-1*\physsep);
            \draw (Tj) -- (0.6*\separ,-1*\physsep);
            \draw[dashed] (0.4*\separ,-1*\physsep) -- (0.6*\separ,-1*\physsep);
            \draw (O1) -- (0.4*\separ,0);
            \draw (Oj) -- (0.6*\separ,0);
            \draw[dashed] (0.4*\separ,0) -- (0.6*\separ,0);
            \draw (Ts1) -- (0.4*\separ,\physsep);
            \draw (Tsj) -- (0.6*\separ,\physsep);
            \draw[dashed] (0.4*\separ,\physsep) -- (0.6*\separ,\physsep);
            \draw (T1) -- (-0.5*\separ,-1*\physsep);
            \draw (O1) -- (-0.5*\separ,0);
            \draw (Ts1) -- (-0.5*\separ,\physsep);
        \end{scope}
    \end{tikzpicture}
    \caption{\textbf{MPO contractions and environments.} How is the global Hamiltonian reduced to the local contractions used during a canonical sweep?  \textbf{(a)} The full network gives the Hamiltonian expectation value $\bra\psi\hat H\ket\psi$.  \textbf{(b)} and \textbf{(c)} collect all tensors to the left and right of site~$i$ into the environments $\mathcal L^{[i]}$ and $\mathcal R^{[i]}$.  Contracting these environments with the local MPO tensor produces the effective Hamiltonian used in the projected update.}
    \label{fig:contractions}
\end{figure}

The partial contraction over sites $j<i$ defines the left environment $\mathcal L^{[i]}$, and the partial contraction over sites $j>i$ defines the right environment $\mathcal R^{[i]}$.  They are shown in \cref{fig:contractions}(b) and (c), respectively.

These canonical block embeddings and MPO environments construct the local BUG equations used in the next section.

\section{Canonical MPS/MPO realization and structural results}\label{sec:BUG}
Turning BUG into a canonical MPS algorithm requires a concrete sequence for enlarging the bond bases and expressing the coefficients in those changing bases.  We first define one uncompressed sweep toward a fixed endpoint, which serves as the Galerkin root.  A preparation pass records center tensors and MPO environments.  A reverse pass then builds updated bases and projects the recorded coefficients into them while moving back toward the root.  A final local solve updates the root coefficient tensor.  The Galerkin root stays fixed even though the orthogonality center and the tensor carrying the coefficients move.  We then derive when enlargement with the current center tensor gives the same space as retaining the previous basis explicitly, and when projection into an updated basis preserves the represented block.

\subsection{Canonical preparation and MPO environments}

We specialize the TTN construction to a chain-shaped MPS and choose the left endpoint as the root.  Each tensor in this representation carries one physical leg.  \Cref{app:non_leaf_bug} connects this representation to TTN formulations that place physical modes at the leaves.  We consider
\begin{equation}
    \partial_t \ket{\psi(t)} = -i \hat{H} \ket{\psi(t)},
\end{equation}
with a time-independent Hermitian MPO Hamiltonian $\hat H$.  The sweep requires the old MPS in site-1 canonical form,
\begin{equation}\label{eq:left_canonical_prep}
 \ket{\psi(t)}
 =C_{\mathrm{prep}}^{[1]}Q_{L,\mathrm{old}}^{[2]}
 \cdots Q_{L,\mathrm{old}}^{[L]},
\end{equation}
where every $Q_{L,\mathrm{old}}^{[j]}$ is right-isometric in the sense of \cref{eq:rc_form_index}.  We then perform one left-to-right QR pass on a working copy.  At site $i<L$, we record the current center tensor and factorize
\begin{equation}\label{eq:canonical_preparation_pass}
 C_{\mathrm{prep}}^{[i]}=Q_R^{[i]}R^{[i]},
 \qquad
 C_{\mathrm{prep}}^{[i+1]}
 =R^{[i]}Q_{L,\mathrm{old}}^{[i+1]}.
\end{equation}
After this factorization, $Q_R^{[i]}$ is installed as the site-$i$ tensor of the working preparation copy and $C_{\mathrm{prep}}^{[i+1]}$ becomes its new orthogonality-center tensor, while the incoming tensor $C_{\mathrm{prep}}^{[i]}$ is retained separately for the reverse sweep. The left isometries $Q_R^{[i]}$ generated by this same pass build the environments
\begin{align}
    \mathcal{L}^{[i]}_{\nu_{i-1}\mu_{i-1}\Bar{\nu}_{i-1}} = \sum_{\substack{p_{i-1}q_{i-1}\\
    \nu_{i-2}\mu_{i-2}\Bar{\nu}_{i-2}}} & \mathcal{L}^{[i-1]}_{\nu_{i-2}\mu_{i-2}\Bar{\nu}_{i-2}} \Bar{Q}^{[i-1]}_{R,\Bar{\nu}_{i-2}q_{i-1}\Bar{\nu}_{i-1}} \nonumber \\
    \cdot & H^{[i-1]}_{\mu_{i-2}q_{i-1}p_{i-1}\mu_{i-1}} \nonumber \\
    \cdot & Q^{[i-1]}_{R,\nu_{i-2}p_{i-1}\nu_{i-1}},
\end{align}
for $i=2,\ldots,L$, with $\mathcal L^{[1]}=1$.  This is the only center-moving preparation pass.  It preserves the represented wavefunction and does not overwrite the stored old tensors in \cref{eq:left_canonical_prep}.  We store the sequence $C_{\mathrm{prep}}^{[i]}$ for the reverse sweep.  The Galerkin root remains fixed at site~1 while the center of the preparation copy moves from site~1 to site~$L$.  During the reverse sweep, a separate tensor carrying the coefficients moves back toward the root.  We call it the working tensor.

\begin{figure*}
    \centering
    \begin{tikzpicture}
        \def\physsep{1}
        \def\separ{1.5}
        \def\thickness{0.1cm}
        \begin{scope}
            \draw (-1*\separ,-1.2*\physsep) rectangle (5.4*\separ,1.2*\physsep);
            \node[font=\bfseries] at (-0.75*\separ,0.95*\physsep) {(a)};
            \node[mps node] (T) at (0,0) {$C_{\mathrm{prep}}^{[i]}$};
            \node[right, scale=1.6] (Q) at (\separ-0.07,0) {};
            \node at (\separ,0) {$Q_{L,\mathrm{old}}^{[i+1]}$};
            \draw[-stealth] (2*\separ,0) -- (2.5*\separ,0)
            node[midway,above,align=center]
            {\scriptsize QR on $C_{\mathrm{prep}}^{[i]}$\\[-1mm]
            \scriptsize absorb $R^{[i]}$ right};
            \draw (-0.75*\separ,0) -- (T);
            \draw (T) -- (\separ-0.47,0);
            \draw (\separ+0.69,0) -- (1.75*\separ,0);
            \draw (T) -- (0,\physsep);
            \draw (Q) -- (\separ-0.07,\physsep);

            \begin{scope}[shift={(-0.5*\separ,0)}]
                \node[right] (Q) at (4*\separ,0) {$Q_R^{[i]}$};
                \node[mps node] (T) at (5*\separ,0)
                    {$C_{\mathrm{prep}}^{[i+1]}$};
                \draw (3.25*\separ,0) -- (Q) -- (T) -- (5.75*\separ,0);
                \draw (Q) -- (4*\separ,\physsep);
                \draw (T) -- (5*\separ,\physsep);
            \end{scope}
        \end{scope}

        \begin{scope}[shift={(6.2*\separ,0)}]
            \def\physlegdist{0.2}
            \draw (-0.8*\separ,-1.2*\physsep)
                rectangle (4*\separ,1.2*\physsep);
            \node[font=\bfseries] at (-0.5*\separ,0.95*\physsep) {(b)};
            \foreach \i in {-1,0,1}{
                \draw (\i*\physlegdist,-1*\physsep)
                    -- (\i*\physlegdist,\physsep);
            }
            \node[draw,ellipse,fill=mblue!75] (Heff) at (0,0)
                {$H_{\mathrm{eff}}^{[i]}$};
            \node at (0.7*\separ,0) {$=$};
            \filldraw[fill=mgreen!75]
                (\separ,-0.9*\physsep)
                rectangle (1.5*\separ,0.9*\physsep)
                node[pos=.5] {$\mathcal{L}^{[i]}$};
            \node[mpo node] (H) at (2.25*\separ,0) {$H^{[i]}$};
            \filldraw[fill=mgreen!75]
                (3*\separ,-0.9*\physsep)
                rectangle (3.5*\separ,0.9*\physsep)
                node[pos=.5] {$\mathcal{R}^{[i]}$};
            \draw (1.5*\separ,0) -- (H) -- (3*\separ,0);
            \draw (2.25*\separ,-1*\physsep)
                -- (H) -- (2.25*\separ,\physsep);
            \draw (1.5*\separ,0.5*\physsep)
                to[out=0,in=-90] (2.1*\separ,\physsep);
            \draw (1.5*\separ,-0.5*\physsep)
                to[out=0,in=90] (2.1*\separ,-\physsep);
            \draw (3*\separ,0.5*\physsep)
                to[out=180,in=-90] (2.4*\separ,\physsep);
            \draw (3*\separ,-0.5*\physsep)
                to[out=180,in=90] (2.4*\separ,-\physsep);
        \end{scope}

        \draw (-1*\separ,-4.2*\physsep)
            rectangle (5.4*\separ,-1.2*\physsep);
        \begin{scope}[shift={(0.5*\separ,-3*\physsep)}]
            \node[font=\bfseries] at (-1.25*\separ,1.5*\physsep) {(c)};
            \node[ellipse,draw,fill=morange!75] (TT) at (0,0)
                {$\widehat C^{[i]}$};
            \draw (0,1*\physsep) -- (TT) -- (1.5*\separ,0);
            \draw[line width=\thickness] (-1.3*\separ,0) -- (TT);
            \draw[-stealth] (1.75*\separ,0) -- (2.25*\separ,0)
                node[midway,above] {LQ};
            \node[circle,draw,fill=mcyan!75] (R) at (3*\separ,0)
                {$L^{[i]}$};
            \draw[line width=\thickness]
                (2.5*\separ,0) -- (R) -- (4*\separ,0);
            \node[left,outer sep=0] (Q) at (4*\separ,0)
                {$\widetilde Q_L^{[i]}$};
            \draw (4*\separ,\physsep) -- (Q) -- (4.6*\separ,0);
        \end{scope}

        \draw (-1*\separ,-7.2*\physsep)
            rectangle (5.4*\separ,-4.2*\physsep);
        \begin{scope}[shift={(0,-5.7*\physsep)}]
            \node[font=\bfseries] at (-0.75*\separ,1.15*\physsep) {(e)};
            \node[mps node] (T) at (0,0) {$C_{\mathrm{prep}}^{[i-1]}$};
            \node[basis change] (M) at (1.5*\separ,0) {$M^{[i]}$};
            \draw (-0.75*\separ,0) -- (T) -- (M);
            \draw (T) -- (0,\physsep);
            \draw[line width=\thickness] (M) -- (2.5*\separ,0);
            \node at (2.75*\separ,0) {$=$};
            \node[mps node] (T) at (3.75*\separ,0)
                {$C_t^{[i-1]}$};
            \draw (3*\separ,0) -- (T) -- (3.75*\separ,\physsep);
            \draw[line width=\thickness] (T) -- (4.5*\separ,0);
        \end{scope}

        \draw (5.4*\separ,-7.2*\physsep)
            rectangle (10.2*\separ,-1.2*\physsep);
        \begin{scope}[shift={(6.9*\separ,-4.2*\physsep)}]
            \node[font=\bfseries] at (-1.25*\separ,2.5*\physsep) {(d)};
            \node[basis change] (Mi) at (0,0) {$M^{[i]}$};
            \draw[line width=\thickness]
                (Mi) to[out=-90,in=0] (-\separ,-2*\physsep);
            \draw (Mi) to[out=90,in=0] (-\separ,2*\physsep);
            \node at (0.6*\separ,0) {$=$};
            \node[left, scale=1.75] (Qold) at (1.75*\separ,2*\physsep) {};
            \node at (1.75*\separ,2*\physsep)
                {$\scriptstyle Q_{L,\mathrm{old}}^{[i]}$};
            \draw[line width=\thickness]
                (1.75*\separ,-2*\physsep)
                -- (0.7*\separ,-2*\physsep);
            \node[left] (Qnew) at (1.75*\separ,-2*\physsep)
                {$\overline{\widetilde Q}_L^{[i]}$};
            \node[basis change] (Mi1) at (2.5*\separ,0)
                {$M^{[i+1]}$};
            \draw (Qnew)
                to[out=0,in=-90] (Mi1)
                to[out=90,in=0] (Qold);

            \draw (1.75*\separ-0.85,2*\physsep) -- (0.7*\separ,2*\physsep);
            \draw (Qnew) -- (Qold);
        \end{scope}
    \end{tikzpicture}

\caption{\textbf{One uncompressed canonical MPS-BUG sweep.} How is the BUG basis update realized as a sweep toward a fixed endpoint?  \textbf{(a)} A preparation pass records center tensors and environments.  \textbf{(b)} These environments define each local effective Hamiltonian.  \textbf{(c)} The retained input and local predictor enlarge the next block basis.  \textbf{(d)} The overlap between the previous and updated block bases is evaluated recursively.  \textbf{(e)} This overlap projects the recorded coefficients into the updated basis and moves the working tensor toward the root.  A final projected solve updates the root tensor.  Thick bonds denote enlarged virtual dimensions.  Compression and normalization are not part of this map.}
    \label{fig:bug_steps}
\end{figure*}
\subsection{Basis enlargement and coefficient projection}

We process the reverse sweep in the order $i=L,\ldots,2$, starting from
$C_t^{[L]}:=C_{\mathrm{prep}}^{[L]}$.  At each subsequent site, $C_t^{[i]}$ is the working tensor produced by projecting the coefficients into the basis updated at the previous site.  Its right coefficient index has been mapped into the coordinates of the new block basis on sites $i+1,\ldots,L$.  Before the local solve, the corresponding right environment is
\begin{align}
\label{eq:R_update}
    \mathcal{R}^{[i]}_{\nu'_i\mu_i\Bar{\nu}'_i} = \sum_{\substack{p_{i+1}q_{i+1}\\ \Bar{\nu}'_{i+1} \mu_{i+1}\nu'_{i+1}}} & \Bar{\widetilde Q}^{[i+1]}_{L,\Bar{\nu}'_i q_{i+1} \Bar{\nu}'_{i+1}} H^{[i+1]}_{\mu_iq_{i+1}p_{i+1}\mu_{i+1}} \nonumber \\
    \cdot & \widetilde Q^{[i+1]}_{L,\nu'_i p_{i+1} \nu'_{i+1}} \mathcal{R}^{[i+1]}_{\nu'_{i+1} \mu_{i+1} \Bar{\nu}'_{i+1}},
\end{align}
with $\mathcal R^{[L]}=1$.  The tensors on sites $i+1,\ldots,L$ form the newly constructed right-isometric block basis.  Together with $\mathcal L^{[i]}$, this defines

\begin{align}
\label{eq:effective_hamiltonian}
    H^{[i]}_{\text{eff}, \Bar{\nu}_{i-1}q_i\Bar{\nu}'_i\nu_{i-1}p_i\nu'_i} = \sum_{\mu_{i-1}\mu_i} &\mathcal{L}^{[i]}_{\nu_{i-1}\mu_{i-1}\Bar{\nu}_{i-1}} \nonumber \\
    \cdot &H^{[i]}_{\mu_{i-1}q_ip_i\mu_i}\mathcal{R}^{[i]}_{\nu'_i\mu_i \Bar{\nu}'_i}.
\end{align}

If $J_i$ denotes the isometric embedding of a site-$i$ working tensor into the full Hilbert space, then
\begin{equation}\label{eq:heff_projection}
 H_{\mathrm{eff}}^{[i]}=J_i^\dagger HJ_i.
\end{equation}
Consequently, $H_{\mathrm{eff}}^{[i]}$ is Hermitian whenever $H$ is Hermitian, and an exact local exponential preserves the Frobenius norm of the working tensor.  This local fact does not imply that a complete sweep is unitary.
\subsubsection{Local projected update and basis construction.}

Given the environments at site $i$, we evolve the working tensor locally,
\begin{align}
\label{eq:bug_local_update}
 \partial_{\tau}C^{[i]}(\tau)
 &=-iH_{\mathrm{eff}}^{[i]}C^{[i]}(\tau),
 &C^{[i]}(0)&=C_t^{[i]},\\
 \widetilde C^{[i]}&:=C^{[i]}(\delta).
\end{align}
Here $\tau\in[0,\delta]$ is a local integration variable that is reset to zero at every site, while $\delta$ is the physical duration represented by the complete endpoint sweep.  With the environments held fixed during this local solve,
\begin{equation}
 \widetilde C^{[i]}
 =\exp\!\left(-i\delta H_{\mathrm{eff}}^{[i]}\right)C_t^{[i]}.
\end{equation}
The local intervals are not accumulated across the sites: at every non-root site, $\widetilde C^{[i]}$ is used only to supply directions for the updated block basis, rather than as a successively evolved physical state.  We therefore call it the local predictor.  The final local solve at the Galerkin root supplies the coefficient update retained by the endpoint sweep.  In the alternating-endpoint composition introduced below, $\delta=h/2$, where $h$ is the complete physical time-step size. We call that updated basis the trial basis.  This is not yet the Galerkin update of the root coefficient tensor.  In the variant that enlarges the basis with the current center tensor, which we call center augmentation, the old tensor retained in the stack is
\begin{equation}\label{eq:bug_keep_tensor}
 C_{\mathrm{keep}}^{[i]}=
 \begin{cases}
  Q_{L,\mathrm{old}}^{[L]},&i=L,\\
  C_t^{[i]},&2\leq i<L.
 \end{cases}
\end{equation}
Thus the endpoint explicitly retains its old row-isometric basis tensor, whereas every internal site retains the working tensor.  We concatenate the retained tensor with the local predictor along the left virtual index,
\begin{equation}\label{eq:bug_stack}
 \widehat C^{[i]}=
 \begin{pmatrix}C_{\mathrm{keep}}^{[i]}\\ \widetilde C^{[i]}\end{pmatrix}.
\end{equation}
We matricize the stack with the left virtual index labeling the rows and the physical and right indices labeling the columns.  We then use the LQ factorization
\begin{equation}\label{eq:bug_lq}
 \widehat C^{[i]}_{(\mathrm{left})\times(\mathrm{phys,right})}
 =L^{[i]}\widetilde Q_L^{[i]}
\end{equation}
This factorization produces a row-isometric tensor
$\widetilde Q_L^{[i]}\in\C^{\chi_{i-1}'\times d\times\chi_i'}$.
Only this isometric factor is retained at site $i$.  It defines the updated right-block basis, which may be larger because it includes directions from the local predictor.

\subsubsection{Projecting coefficients into the updated block basis.}
After constructing the updated basis, we must express the recorded coefficients in its coordinates before continuing toward the root.  Let
$B_{\mathrm{old}}^{[i]}$ and $\widehat B^{[i]}$ denote the previous and updated
row-isometric right-block bases obtained by contracting
$Q_{L,\mathrm{old}}^{[i]},\ldots,Q_{L,\mathrm{old}}^{[L]}$ and
$\widetilde Q_L^{[i]},\ldots,\widetilde Q_L^{[L]}$, respectively. The objects $B_{\mathrm{old}}^{[i]}$ and $\widehat B^{[i]}$ denote the complete
contracted right-block embeddings, rather than additional site tensors, and are
not formed explicitly; \cref{fig:bug_steps}(d) evaluates their overlap
recursively from the local isometries. Their overlap is
$M^{[i]}=B_{\mathrm{old}}^{[i]}(\widehat B^{[i]})^\dagger$.  Because the
stored old MPS remains in site-1 canonical form, its tensor at every non-root site is $Q_{L,\mathrm{old}}^{[i]}$.
Starting from $M^{[L+1]}=I$, we evaluate this overlap recursively as
\begin{equation}\label{eq:bug_overlap_recursion}
 M^{[i]}_{\nu_{i-1}\nu'_{i-1}}
 =\sum_{p_i,\nu_i,\nu'_i}
 Q^{[i]}_{L,\mathrm{old},\nu_{i-1}p_i\nu_i}
 M^{[i+1]}_{\nu_i\nu'_i}
 \Bar{\widetilde Q}^{[i]}_{L,\nu'_{i-1}p_i\nu'_i}.
\end{equation}
Multiplying the recorded coefficients by this overlap projects them into the updated coordinates,
\begin{equation}\label{eq:working_center}
 C_{t,\nu_{i-2}p_{i-1}\nu'_{i-1}}^{[i-1]}
 =\sum_{\nu_{i-1}}C^{[i-1]}_{\mathrm{prep},\nu_{i-2}p_{i-1}\nu_{i-1}}
 M^{[i]}_{\nu_{i-1}\nu'_{i-1}},
\end{equation}
which supplies the working tensor for the next site.  We call this multiplication coefficient transport.  Because the map uses the block-basis overlap, we also refer to it as overlap transport.  \Cref{eq:bug_overlap_recursion,eq:working_center} specify the overlap and transport operations in canonical MPS form.  The same bookkeeping appears in TTN-BUG and recent center-rooted MPS formulations \cite{Ceruti2023,petersson2026dynamicalsimulationsschrodingersequation,bauer2026timeevolutionhybridtensor}.  After site $i$ is processed, the updated tensors on sites $i,\ldots,L$ form the right-block trial basis used by each subsequent projected equation.  No predictor on that block is integrated backward.

\subsubsection{Root Galerkin solve.}
The local solves for $i=L,\ldots,2$ supply trial bases and overlap maps.  After site~2, \cref{eq:working_center} gives $C_t^{[1]}$ in the final right-block basis.  We then solve \cref{eq:bug_local_update} once with $i=1$ and the final right environment.  We retain $\widetilde C^{[1]}$ as the root coefficient tensor and perform no further basis-enlargement step at that site.  This completes the uncompressed endpoint-rooted map $\Phi_\delta^{R\to L}$.  The map ends after the root solve and contains no rank cap, compression, or normalization.

\Cref{fig:bug_steps} summarizes how preparation, basis enlargement, coefficient projection, and the root solve fit together in one endpoint-rooted sweep.  It separates the preparation pass from the reverse basis-construction sweep and distinguishes the recorded center tensors from the previous and updated right-block bases.

The construction therefore separates the uncompressed sweep from any subsequent compression.

\subsection{Two ways to retain the previous basis}

The enlarged basis should retain the previous block space while adding directions supplied by the local predictor.  Standard rank-adaptive BUG guarantees retention by including the previous orthonormal block basis explicitly \cite{Ceruti2022B,Ceruti2023}; this inclusion is also the key hypothesis used there to show that projection into the augmented basis leaves the starting tensor unchanged.  \Cref{eq:bug_keep_tensor} does this at the endpoint.  At an internal site, center augmentation  of
Ref.~\cite{bauer2026timeevolutionhybridtensor} instead retains the working tensor, while the alternative retains the previous block basis itself.  These two stacks need not span the same space.  The following proposition is elementary linear algebra, stated here to expose the full-rank assumption implicit when the center tensor is used in place of the basis.  It gives the exact condition under which the two stacks agree before any rank cap or compression.  Suppressing the site label, let
$B_0\in\C^{r\times n}$ be a row-isometric previous block basis,
$B_0B_0^\dagger=I_r$.  Write the matricized old center as
$C_0=SB_0$, where $S\in\C^{m\times r}$ contains its coefficients in that basis.  For a local predictor
$\widetilde C\in\C^{m\times n}$, the two augmentation choices generate
\begin{equation}\label{eq:center_explicit_spaces}
 \mathcal V_{\mathrm c}
 =\operatorname{range}\!\begin{pmatrix}C_0\\ \widetilde C\end{pmatrix}^{\!\dagger},
 \qquad
 \mathcal V_{\mathrm e}
 =\operatorname{range}\!\begin{pmatrix}B_0\\ \widetilde C\end{pmatrix}^{\!\dagger}.
\end{equation}
These are the row spaces supplied by the stacked inputs, before a later bond cap or compression.

\begin{proposition}[Guaranteed retention of the previous block space]\label{prop:center_augmentation}
With the definitions above,
\begin{equation}\label{eq:center_row_condition}
 \operatorname{range}(C_0^\dagger)
 =\operatorname{range}(B_0^\dagger)
 \quad\Longleftrightarrow\quad
 \operatorname{rank}(S)=r.
\end{equation}
Consequently, $\mathcal V_{\mathrm c}=\mathcal V_{\mathrm e}$ for every $\widetilde C\in\C^{m\times n}$ if and only if $S$ has rank $r$.  For one particular predictor, rank $r$ is sufficient but need not be necessary.  The predictor itself may supply a direction absent from $C_0$.  In that case, retention of the complete previous block space is equivalent to
\begin{equation}\label{eq:center_fixed_predictor_condition}
 \operatorname{rank}\!\begin{pmatrix}C_0\\ \widetilde C\\ B_0\end{pmatrix}
 =
 \operatorname{rank}\!\begin{pmatrix}C_0\\ \widetilde C\end{pmatrix}.
\end{equation}
When $S$ is square, the condition $\operatorname{rank}(S)=r$ is equivalently nonsingularity, full column rank, or full row rank.
\end{proposition}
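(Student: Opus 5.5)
The plan is to reduce everything to the factorization $C_0^\dagger = B_0^\dagger S^\dagger$ and to the fact that $B_0^\dagger$ is an injective isometry, because $B_0B_0^\dagger=I_r$.

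\textbf{First equivalence.} From the factorization, $\operatorname{range}(C_0^\dagger)=B_0^\dagger\,\operatorname{range}(S^\dagger)\subseteq\operatorname{range}(B_0^\dagger)$. The left side has dimension $\operatorname{rank}(S^\dagger)=\operatorname{rank}(S)$, since $B_0^\dagger$ is injective. The right side has dimension $r$. Equality of the two ranges therefore holds exactly when $\operatorname{rank}(S)=r$, which proves \cref{eq:center_row_condition}.

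\textbf{Equality of the augmented spaces.} We use $\mathcal V_{\mathrm c}=\operatorname{range}(C_0^\dagger)+\operatorname{range}(\widetilde C^\dagger)$ and $\mathcal V_{\mathrm e}=\operatorname{range}(B_0^\dagger)+\operatorname{range}(\widetilde C^\dagger)$.
\begin{itemize}
\item \emph{Sufficiency.} If $\operatorname{rank}(S)=r$, the first summands agree by the first equivalence, so $\mathcal V_{\mathrm c}=\mathcal V_{\mathrm e}$ for every $\widetilde C$.
\item \emph{Necessity.} Choose $\widetilde C=0$. Then $\mathcal V_{\mathrm c}=\mathcal V_{\mathrm e}$ reduces to $\operatorname{range}(C_0^\dagger)=\operatorname{range}(B_0^\dagger)$, which forces $\operatorname{rank}(S)=r$. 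Here $\widetilde C=0$ is an admissible element of $\C^{m\times n}$.
\end{itemize}

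\textbf{A fixed predictor.} For a single predictor, rank $r$ is not necessary. Take $r\ge1$, $S=0$ and $\widetilde C=B_0$ when $m\ge r$. Then $\mathcal V_{\mathrm c}=\mathcal V_{\mathrm e}$ although $\operatorname{rank}(S)=0<r$. A variant with $S$ merely rank-deficient works if $m<r$ prevents the choice above. I expect this counterexample, and choosing it so the dimensions are compatible, to be the only mildly delicate step.

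\textbf{Retention condition.} Retention of the previous block space means $\operatorname{range}(B_0^\dagger)\subseteq\mathcal V_{\mathrm c}$. This is equivalent to
\[
\mathcal V_{\mathrm c}+\operatorname{range}(B_0^\dagger)=\mathcal V_{\mathrm c}.
\]
Since one space contains the other, this equality of subspaces is equivalent to equality of their dimensions. Those dimensions are the row ranks of the stacked matrices, which gives \cref{eq:center_fixed_predictor_condition}. The same inclusion also yields $\mathcal V_{\mathrm c}=\mathcal V_{\mathrm e}$, because $\operatorname{range}(C_0^\dagger)\subseteq\operatorname{range}(B_0^\dagger)$ always holds.

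\textbf{Square case.} When $S$ is square ($m=r$), rank $r$ is by standard linear algebra the same as invertibility, full column rank, or full row rank.
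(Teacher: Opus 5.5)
Your proposal is correct and follows essentially the same route as the paper: the factorization $C_0^\dagger=B_0^\dagger S^\dagger$ with $B_0^\dagger$ an injective isometry, and a single degenerate predictor for necessity. You take $\widetilde C=0$ where the paper takes $\widetilde C=C_0$; both reduce $\mathcal V_{\mathrm c}$ to $\operatorname{range}(C_0^\dagger)$ while $\mathcal V_{\mathrm e}=\operatorname{range}(B_0^\dagger)$. The retention condition then comes from the same observation that appending the rows of $B_0$ must not enlarge the row space.

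One minor caveat concerns your non-necessity example. For $m>r$ you must zero-pad $B_0$ to obtain $\widetilde C\in\C^{m\times n}$, and the rank-deficient variant for $m<r$ exists only when $2m\ge r$, since otherwise $\dim\mathcal V_{\mathrm c}\le 2m<r$. Because the claim is existential, your $m\ge r$ example already suffices.
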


\begin{proof}
Since $B_0B_0^\dagger=I_r$, the map $B_0^\dagger:\C^r\to\C^n$ is an isometry and
$\operatorname{range}(C_0^\dagger)=B_0^\dagger\operatorname{range}(S^\dagger)$.
This equals $\operatorname{range}(B_0^\dagger)$ exactly when
$\operatorname{range}(S^\dagger)=\C^r$, which is equivalent to
$\operatorname{rank}(S)=r$.  This equality makes the two spaces in
\cref{eq:center_explicit_spaces} identical for every predictor.  Conversely, if $\operatorname{rank}(S)<r$, choose
$\widetilde C=C_0$, as occurs for a vanishing effective generator.  The center stack then spans only
$\operatorname{range}(C_0^\dagger)$ and does not contain the complete previous block space.  Finally, \cref{eq:center_fixed_predictor_condition} is exactly the condition that appending the rows of $B_0$ does not enlarge the center-stack row space.
\end{proof}

A two-dimensional example makes the failed guarantee explicit.  Take
\begin{equation}\label{eq:center_rank_counterexample}
 B_0=I_2,
 \qquad
 S=\begin{pmatrix}1&0\\0&0\end{pmatrix},
 \qquad
 \widetilde C=C_0=S.
\end{equation}
The center stack spans only $(1,0)$, whereas the explicit stack spans all of $\C^2$.  The previous-basis direction $(0,1)$ is missing from $\mathcal V_{\mathrm c}$.  This is a loss of a guarantee, not an unavoidable outcome for every predictor.  For example, a predictor containing $(0,1)$ restores the missing direction despite the singular $S$.

The distinction between exact and numerical rank is also essential.  Because
$C_0C_0^\dagger=SS^\dagger$, $C_0$ and $S$ have the same nonzero singular values.  Thus $S_\varepsilon=\operatorname{diag}(1,\varepsilon)$ retains the complete previous block space for every $\varepsilon\ne0$ in exact arithmetic.  A numerical-rank threshold $\tau$ may treat this direction as unresolved when $|\varepsilon|\lesssim\tau$ for an absolute criterion, or when $|\varepsilon|\lesssim\tau\sigma_{\max}(S_\varepsilon)$ for a relative criterion.  Smallness alone is not exact rank deficiency.  If a full LQ factorization pads a rank-deficient stack with arbitrary orthogonal completion rows, those rows likewise do not constitute a guaranteed copy of the missing previous-basis direction.

\Cref{prop:center_augmentation} identifies when the two enlargement choices produce the same local trial space before compression.  If its rank condition fails, center augmentation is a different construction, although the predictor may still supply the missing directions.  Explicit retention therefore provides a stronger structural guarantee, not a demonstrated accuracy advantage.  Extending the result through a sweep also requires understanding what happens when coefficients are projected between the previous and updated block bases.

\subsection{Projection between previous and updated block bases}
We now ask what can be lost in this projection.  \Cref{eq:bug_overlap_recursion} gives the overlap of two row-isometric block bases.  Earlier BUG conservation arguments use the zero-loss case, ensured by explicitly including the old space in the augmented basis \cite{Ceruti2022B,Ceruti2023}.  The following proposition records the standard Pythagorean identity for orthogonal projection in the notation needed here and extends the discussion to an arbitrary coefficient tensor and to incomplete inclusion.  Its algorithmic instance is $M^{[i]}$.

\begin{proposition}[Contractivity of overlap transport]\label{prop:overlap_transport}
Let $B_0\in\C^{r_0\times n}$ and
$\widehat B\in\C^{r_1\times n}$ be row-isometric, and define
$O=B_0\widehat B^\dagger$ and
$P_{\widehat B}=\widehat B^\dagger\widehat B$.  For every matricized coefficient tensor $X\in\C^{k\times r_0}$,
\begin{equation}\label{eq:transport_pythagoras}
 \lVert XO\rVert_F^2
 =\lVert X\rVert_F^2
 -\lVert XB_0(I-P_{\widehat B})\rVert_F^2
 \leq\lVert X\rVert_F^2.
\end{equation}
Equality for this particular $X$ holds if and only if
$XB_0(I-P_{\widehat B})=0$.  The transport is norm preserving for every $X$ if and only if
\begin{equation}\label{eq:transport_inclusion}
 \operatorname{range}(B_0^\dagger)
 \subseteq\operatorname{range}(\widehat B^\dagger),
\end{equation}
equivalently
\begin{equation}\label{eq:overlap_isometry}
 OO^\dagger
 =B_0P_{\widehat B}B_0^\dagger=I_{r_0}.
\end{equation}
In this case $O^\dagger$ is an isometry from the previous coefficient space to the updated coefficient space and $B_0=O\widehat B$.
\end{proposition}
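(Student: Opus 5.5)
The plan is to reduce everything to the algebra of the orthogonal projector $P_{\widehat B}=\widehat B^\dagger\widehat B$. Since $\widehat B\widehat B^\dagger=I_{r_1}$, $P_{\widehat B}$ is Hermitian and idempotent. For any row vector $y\in\C^{1\times n}$ we then have $\lVert y\widehat B^\dagger\rVert^2=y P_{\widehat B}y^\dagger=\lVert yP_{\widehat B}\rVert^2$. Applied row by row to $Y:=XB_0$, this gives $\lVert XO\rVert_F^2=\lVert XB_0\widehat B^\dagger\rVert_F^2=\lVert XB_0P_{\widehat B}\rVert_F^2$.

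Next I will use the orthogonal splitting $Y=YP_{\widehat B}+Y(I-P_{\widehat B})$. The cross term $\operatorname{tr}\bigl(YP_{\widehat B}(I-P_{\widehat B})Y^\dagger\bigr)$ vanishes, so $\lVert Y\rVert_F^2=\lVert YP_{\widehat B}\rVert_F^2+\lVert Y(I-P_{\widehat B})\rVert_F^2$. Because $B_0$ is row-isometric, $\lVert XB_0\rVert_F^2=\operatorname{tr}(XB_0B_0^\dagger X^\dagger)=\lVert X\rVert_F^2$. Combining these yields \cref{eq:transport_pythagoras} and the inequality. The equality case for a fixed $X$ follows immediately, since a Frobenius norm vanishes only for the zero matrix.

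For the uniform statement I plan to prove a chain of equivalences: norm preservation for all $X$, then $B_0(I-P_{\widehat B})=0$, then \cref{eq:transport_inclusion}, then \cref{eq:overlap_isometry}.
\begin{itemize}
\item Taking $X=e_j^\top$ for each $j$ shows that every row of $B_0(I-P_{\widehat B})$ vanishes. The converse direction is immediate.
\item $B_0(I-P_{\widehat B})=0$ is equivalent to $(I-P_{\widehat B})B_0^\dagger=0$, because $P_{\widehat B}$ is Hermitian. This says $P_{\widehat B}$ fixes every column of $B_0^\dagger$. Since $\operatorname{range}(P_{\widehat B})=\operatorname{range}(\widehat B^\dagger)$, this is exactly the inclusion.
\item For the link with \cref{eq:overlap_isometry}, compute $OO^\dagger=B_0\widehat B^\dagger\widehat B B_0^\dagger=B_0P_{\widehat B}B_0^\dagger$. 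Using $B_0B_0^\dagger=I_{r_0}$, we get $I_{r_0}-OO^\dagger=B_0(I-P_{\widehat B})B_0^\dagger=\bigl(B_0(I-P_{\widehat B})\bigr)\bigl(B_0(I-P_{\widehat B})\bigr)^\dagger$, using idempotence and Hermiticity. This Gram matrix vanishes if and only if $B_0(I-P_{\widehat B})=0$.
\end{itemize}

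Finally, under the inclusion, $O\widehat B=B_0P_{\widehat B}=B_0$. The identity $(O^\dagger)^\dagger O^\dagger=OO^\dagger=I_{r_0}$ shows that $O^\dagger:\C^{r_0}\to\C^{r_1}$ is an isometry.

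None of these steps is a real obstacle. The point that needs the most care is matching conventions: $X$ acts on the left and the projector on row spaces, so the Hermitian property of $P_{\widehat B}$ must be used to pass between the row condition $B_0(I-P_{\widehat B})=0$ and the column-range inclusion \cref{eq:transport_inclusion}.
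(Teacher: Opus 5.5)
Your proposal is correct and follows essentially the same route as the paper: the Pythagorean splitting with the orthogonal projector $P_{\widehat B}$, row isometry of $B_0$, and the Gram factorization $I_{r_0}-OO^\dagger=\bigl(B_0(I-P_{\widehat B})\bigr)\bigl(B_0(I-P_{\widehat B})\bigr)^\dagger$. The only differences are cosmetic. You test with unit rows $e_j^\top$ to get $B_0(I-P_{\widehat B})=0$, whereas the paper passes directly from norm preservation for every $X$ to $OO^\dagger=I_{r_0}$ via $OO^\dagger\preceq I_{r_0}$.
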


\begin{proof}
The matrix $P_{\widehat B}$ is the orthogonal projector onto
$\operatorname{range}(\widehat B^\dagger)$, and hence
$OO^\dagger=B_0P_{\widehat B}B_0^\dagger\preceq
B_0B_0^\dagger=I_{r_0}$.  Row isometry of $B_0$ and orthogonality of the projector give \cref{eq:transport_pythagoras}.  Equality for every $X$ is equivalent to $OO^\dagger=I_{r_0}$.  Moreover,
\begin{equation}
\begin{aligned}
 I_{r_0}-OO^\dagger
 &=B_0(I-P_{\widehat B})B_0^\dagger\\
 &=\bigl[(I-P_{\widehat B})B_0^\dagger\bigr]^\dagger
  \bigl[(I-P_{\widehat B})B_0^\dagger\bigr],
\end{aligned}
\end{equation}
which vanishes exactly under \cref{eq:transport_inclusion}.  That inclusion also gives
$B_0P_{\widehat B}=B_0$, and therefore $B_0=O\widehat B$.
\end{proof}

When $r_0=r_1$, norm preservation for every coefficient tensor requires equality of the two block spaces and makes $O$ unitary.  Without inclusion, the reconstructed block
$(XO)\widehat B=XB_0P_{\widehat B}$ is the orthogonal projection of $XB_0$ into the new block space.  In particular,
\begin{equation}\label{eq:overlap_projection_error}
 \lVert XB_0-(XO)\widehat B\rVert_F^2
 =\lVert X\rVert_F^2-\lVert XO\rVert_F^2.
\end{equation}
A particular coefficient tensor can nevertheless be preserved even when the complete previous block space is not contained.  The surrounding canonical tensors define an isometric embedding.  \Cref{eq:overlap_projection_error} therefore also gives the squared physical-state projection error at this cut before any later normalization.  In the implemented recursion, \cref{eq:working_center} is precisely the coefficient transport $X\mapsto XM^{[i]}$ at the cut to the left of site $i$.  The internal center-augmentation step then includes this transported working tensor in the next stack, whereas an unaugmented trial basis need not retain it.  Retention of that particular tensor is weaker than inclusion of the complete previous block space characterized in \cref{prop:center_augmentation}.

\subsubsection{Application within the canonical sweep.}
At an internal site, the recorded center can be written
$C_{\mathrm{prep}}^{[i]}=S_iQ_{L,\mathrm{old}}^{[i]}$.  Suppose that the complete previous block space on sites $i+1,\ldots,L$ is contained in the updated block space already constructed there.  \Cref{prop:overlap_transport} then gives
$M^{[i+1]}(M^{[i+1]})^\dagger=I$, so the previous local basis represented in the current right-block coordinates,
\begin{equation}\label{eq:explicit_basis_coordinate_transport}
 B_{0,\mathrm{cur}}^{[i]}
 =Q_{L,\mathrm{old}}^{[i]}(I_d\otimes M^{[i+1]}),
\end{equation}
is row-isometric and
$C_t^{[i]}=S_iB_{0,\mathrm{cur}}^{[i]}$.  \Cref{prop:center_augmentation} therefore applies to the implemented internal stack under this inductive hypothesis.  The explicit endpoint retention in \cref{eq:bug_keep_tensor} supplies the base case.  If inclusion has already failed at a deeper cut, a later center stack cannot by itself restore the missing full-block guarantee.

Together, the endpoint base case, overlap inclusion, and full-column-rank condition at every internal site give a predictor-independent guarantee that the center-augmented sweep reproduces the explicit previous-basis trial spaces before compression.

\subsubsection{Explicit previous-basis recursion.}
An alternative that retains the previous block basis at the augmentation stage is to stack $B_{0,\mathrm{cur}}^{[i]}$ itself with the predictor, rather than retaining only $C_t^{[i]}$.  If the resulting row-isometric tensor $\widehat Q_{L,\mathrm e}^{[i]}$ spans that stack, then
\begin{equation}\label{eq:explicit_old_basis_inclusion}
 \operatorname{range}\!\left((B_{0,\mathrm{cur}}^{[i]})^\dagger\right)
 \subseteq
 \operatorname{range}\!\left((\widehat Q_{L,\mathrm e}^{[i]})^\dagger\right)
\end{equation}
by construction.  Together with the deeper-block hypothesis, this establishes inclusion of the complete previous right-block basis at the next cut and closes the induction toward the root.  At the augmentation stage, the updated block space therefore contains the previous block space.  A subsequent rank cap or compression can remove included directions, so the recursion does not guarantee this inclusion afterward.  It is not a full-sweep convergence theorem.  Failure of the hypothesis for center augmentation alone implies neither instability nor loss of a conserved quantity.  Existing BUG results that require the hypothesis therefore do not apply directly.

\subsection{Compression map and structural scope}

Compression is external to the uncompressed sweep map $\Phi_\delta$.  Let $\mathcal P_{\epsilon,\chi,r_{\min}}^{L\to R}$ denote a left-to-right MPS compression followed by normalization, where $\chi$ is the bond cap and $r_{\min}$ is the minimum retained rank permitted by the local matrix dimensions.  At each bond, let $s_k$ denote the singular values.  For $\epsilon>0$, let $r_\epsilon$ be the smallest retained rank that satisfies the relative discarded-weight target.  The implemented rank at that bond is
\begin{equation}\label{eq:relative_truncation}
 \begin{split}
 r_\epsilon&=\min\left\{r\geq1:
 \frac{\sum_{k>r}s_k^2}{\sum_k s_k^2}\leq\epsilon\right\},\\
 r&=\min\!\left(\chi,\max(r_\epsilon,r_{\min})\right).
 \end{split}
\end{equation}
If $r_\epsilon>\chi$, the hard cap overrides the tolerance.  All adaptive calculations below use $r_{\min}=2$.  This rank floor leaves at least two directions available on every nontrivial bond for subsequent adaptive growth; it is a numerical safeguard, not a restriction on the physical validity of rank-one states.  The bondwise criterion is not a global state-error bound.  The compression map can remove new basis directions supplied by either enlargement choice.  \Cref{prop:center_augmentation,prop:overlap_transport} therefore apply only to the precompression trial spaces and overlap maps inside each $\Phi_\delta$, under their stated hypotheses.

The endpoint-aware implementation reuses the canonical form returned by the uncompressed sweep.  A left-rooted sweep exits with its center at the left endpoint; the following compression traverses the chain once and leaves the center at the right endpoint.  Normalization rescales only that center tensor.  Reflection then maps the center directly to the left endpoint of the reflected network, avoiding an additional canonicalization pass.

\subsection{Alternating-endpoint schedule}\label{sec:alternating_endpoint_schedule}

For a physical step size $h>0$, each endpoint sweep has duration $\delta=h/2$.  During $\Phi_{h/2}^{R\to L}$, the Galerkin root remains fixed at the left endpoint and the working tensor moves toward it.  We obtain $\Phi_{h/2}^{L\to R}$ by reversing the MPS and MPO consistently, canonicalizing at the reflected left endpoint, applying the same uncompressed endpoint kernel, and undoing that relabeling.  Its Galerkin root is the physical right endpoint and remains fixed during the sweep.

Each $\Phi$ acts on a canonical MPS representation and returns another MPS representation.  Canonicalization leaves the represented wavefunction unchanged, and the temporary site reversal is a coordinate relabeling that is undone.  Neither operation adds physical evolution.  Neither $\Phi$ includes a rank cap, compression, or normalization.  With $r_{\min}=2$, one physical update is
\begin{equation}\label{eq:alternating_endpoint_update}
 \Psi_h=\mathcal P_{\epsilon,\chi,2}^{R\to L}\circ
 \Phi_{h/2}^{L\to R}\circ
 \mathcal P_{\epsilon,\chi,2}^{L\to R}\circ
 \Phi_{h/2}^{R\to L}.
\end{equation}
The first compression supplies the canonical entry gauge and retained rank profile for the reflected half-sweep; the second restores the public center-at-exit convention after reflection is undone.

The implementation treats $\epsilon=0$ as a singular no-tolerance-discard control.  If a hard cap is supplied, it remains active.  When the cap is inactive, this setting can retain nullspace padding and should not be interpreted as the smooth limit of a positive tolerance.

The Galerkin root switches only between the two sweeps.  Endpoint exchange specifies their order and does not introduce a separate integration principle.  The complete compressed composition in \cref{eq:alternating_endpoint_update} is not one step of standard TTN-BUG.  Spatial reflection has not been proved to produce the numerical adjoint of the endpoint sweep.  Compression can discard components and is therefore not globally reversible.  We consequently claim no second-order, symmetric, reversible, or adjoint-composition property for \cref{eq:alternating_endpoint_update}.

For the uncompressed map, a first-order result can nevertheless be transferred from the published TTN-BUG analysis under explicit hypotheses.  The precise statement and its reduction to an endpoint-rooted MPS are given in \cref{app:first_order_transfer}.  This is a specialization of the robust TTN-BUG error bound in Ref.~\cite{Ceruti2023}, not a new convergence analysis.  In particular, it does not apply automatically when center augmentation fails to retain a previous block space, and it gives neither symmetry nor reversibility.

These definitions separate what the uncompressed sweep preserves from what a later compression may discard.  The numerical tests examine both parts in turn.

\begin{table}
    \caption{\textbf{Uncompressed dense-reference refinement.} Does the complete uncompressed implementation approach independent dense dynamics as the timestep is reduced?  At $T=0.4$, the phase-aligned error decreases by approximately a factor of two at every halving of the full timestep $h$, consistently with the first-order time-discretization term in \cref{thm:mps_bug_error}.  The two basis-enlargement choices agree to the displayed precision.  The two-sweep columns omit compression $\mathcal P_{\epsilon,\chi}$, and the local Krylov tolerance is $10^{-12}$.  The short sequence is not a proof of the theorem or its hypotheses.}
    \label{tab:dense_refinement}
    \begin{ruledtabular}
    \begin{tabular}{cccc}
    $h$ & One sweep & Two, center & Two, previous basis \\
    \hline
    $0.10000$ & $1.353\times10^{-1}$ & $6.717\times10^{-2}$ & $6.717\times10^{-2}$ \\
    $0.05000$ & $6.729\times10^{-2}$ & $3.345\times10^{-2}$ & $3.345\times10^{-2}$ \\
    $0.02500$ & $3.350\times10^{-2}$ & $1.668\times10^{-2}$ & $1.668\times10^{-2}$ \\
    $0.01250$ & $1.670\times10^{-2}$ & $8.333\times10^{-3}$ & $8.333\times10^{-3}$ \\
    $0.00625$ & $8.335\times10^{-3}$ & $4.165\times10^{-3}$ & $4.165\times10^{-3}$
    \end{tabular}
    \end{ruledtabular}
\end{table}

\section{Numerical results} \label{sec:numerical_results}

We evaluate the construction in two stages.  We first compare the complete uncompressed implementation with independent dense dynamics. Then we compare the rank-adaptive BUG implementation with 2-TDVP for two 16-site models.

For normalized or unnormalized states, our primary accuracy measure is the phase-aligned state error
\begin{equation}\label{eq:phase_aligned_error}
 e_\psi=\sqrt{\max\!\left(0,2-2
 \frac{|\braket{\psi_{\mathrm{ref}}|\psi}|}
 {\lVert\psi_{\mathrm{ref}}\rVert\,\lVert\psi\rVert}\right)}.
\end{equation}
This is the distance between normalized states after minimizing over their relative global phase.  For the rank-adaptive benchmarks we report the infidelity
\begin{equation}\label{eq:infidelity}
 \mathcal I=1-
 \frac{|\braket{\psi_{\mathrm{ref}}|\psi}|^2}
 {\lVert\psi_{\mathrm{ref}}\rVert^2\lVert\psi\rVert^2}.
\end{equation}
The timestep $h$ always denotes a complete physical step.  Each sweep in the two-sweep BUG update therefore has duration $h/2$.

The 16-site calculations ran on an eight-core Apple M1 Pro with 16~GB of memory.  Numeric kernels could use all eight cores, while benchmark configurations were timed sequentially.  The implemented method is available within MQT-YAQS~\cite{YAQS,wille_mqt2024}.

\begin{table*}[t]
\caption{\textbf{Parameter comparison at $L=16$.} Both methods use relative discarded weight $\epsilon=10^{-12}$, $r_{\min}=2$, $\chi_{\max}=512$, and the same matrix-free Lanczos implementation.  Runtimes are medians of three warmed runs.  The ratio is $t_{\mathrm{2TDVP}}/t_{\mathrm{BUG}}$; values above one favor BUG.  The cap is inactive in every row.}
\label{tab:l16_matched}
\begin{ruledtabular}
\begin{tabular}{lcccccccc}
Model & $h$ & $t_{\mathrm B}$ (s) & $t_{\mathrm T}$ (s) & $t_{\mathrm T}/t_{\mathrm B}$ & $\mathcal I_{\mathrm B}$ & $\mathcal I_{\mathrm T}$ & $\chi_{\mathrm B}$ & $\chi_{\mathrm T}$ \\
\hline
TFIM & $0.01000$ & 1.598 & 2.286 & 1.431 & $6.42\times10^{-8}$ & $6.49\times10^{-9}$ & 8 & 8 \\
TFIM & $0.00500$ & 3.027 & 4.153 & 1.372 & $1.16\times10^{-7}$ & $3.20\times10^{-8}$ & 8 & 8 \\
TFIM & $0.00250$ & 5.479 & 7.706 & 1.407 & $5.25\times10^{-7}$ & $1.52\times10^{-7}$ & 8 & 8 \\
TFIM & $0.00125$ & 10.508 & 14.214 & 1.353 & $2.37\times10^{-6}$ & $7.37\times10^{-7}$ & 8 & 8 \\
Haldane-Shastry & $0.01000$ & 58.889 & 63.991 & 1.087 & $1.60\times10^{-5}$ & $1.68\times10^{-5}$ & 143 & 143 \\
Haldane-Shastry & $0.00500$ & 82.252 & 89.370 & 1.087 & $4.38\times10^{-6}$ & $4.42\times10^{-6}$ & 122 & 123 \\
Haldane-Shastry & $0.00250$ & 100.732 & 116.167 & 1.153 & $3.02\times10^{-6}$ & $2.18\times10^{-6}$ & 92 & 103 \\
Haldane-Shastry & $0.00125$ & 130.223 & 151.073 & 1.160 & $1.14\times10^{-5}$ & $5.76\times10^{-6}$ & 70 & 81
\end{tabular}
\end{ruledtabular}
\end{table*}
\subsection{Uncompressed dense-reference check}

The second stage separates two questions.  First, does the complete uncompressed implementation follow the intended site ordering, reflection, and endpoint restoration?  Second, once those checks pass, does its error decrease under timestep refinement against independent dense dynamics?  To test both questions, we use an asymmetric six-site problem with Hamiltonian
\begin{equation}\label{eq:dense_fixture}
 H=\sum_{i=0}^{4}\sum_{\alpha=x,y,z}J_i^\alpha\sigma_i^\alpha\sigma_{i+1}^\alpha
 +\sum_{i=0}^{5}\sum_{\alpha=x,y,z}h_i^\alpha\sigma_i^\alpha,
\end{equation}
with the coefficients listed in \cref{app:dense_coefficients}.  We initialize the site-index string $010011$ with site~0 as the least-significant bit and evolve to $T=0.4$.  The MPO, BUG sweeps, and final MPS are left uncompressed, with no bond cap or normalization and local Krylov tolerance $10^{-12}$.

The dense Hamiltonian is assembled independently from its Pauli terms in the Kronecker order $O_5\otimes\cdots\otimes O_0$.  Its relative difference from the MPO matrix is $9.97\times10^{-17}$, and the independently reflected matrix agrees with the reflected MPO to $1.01\times10^{-16}$.  The test problem has both an ordering gap and a reflection-asymmetry residual of $0.434$, so either mistake would be visible.  Eigensystem propagation supplies the reference state, whose norm error is $2.22\times10^{-16}$.  The implementation also passes checks of ordering, reflection, endpoint restoration, finite values, and preservation of the input tensors.  These Hamiltonian, reflection, and endpoint checks test implementation correctness.

We next use \cref{tab:dense_refinement} to test timestep responsiveness after those checks pass.

Both two-sweep variants return to the declared endpoint and agree with each other to the precision of $e_\psi$.  Their errors decrease by approximately a factor of two at every timestep halving, consistently with the first-order time-discretization term transferred in \cref{app:first_order_transfer}.  This short sequence is an implementation check, not an independent verification of that theorem or all of its hypotheses.  The one-sweep and two-sweep columns use different compositions and therefore do not test whether endpoint alternation improves accuracy.

\subsection{Rank-adaptive comparison with 2-TDVP}\label{sec:adaptive_comparison}

We compare the alternating-endpoint BUG update in \cref{eq:alternating_endpoint_update} with 2-TDVP for two $L=16$ spin chains at $T=1$.  The first is the open-boundary transverse-field Ising model
\begin{equation}\label{eq:tfim}
 H_{\mathrm{TFIM}}=-\sum_{i=1}^{L-1}Z_iZ_{i+1}-1.05\sum_{i=1}^{L}X_i,
\end{equation}
initialized in $\ket{+}^{\otimes L}$.  The second is the periodic Haldane-Shastry model \cite{haldane1988exact}
\begin{equation}\label{eq:haldane_shastry}
 H_{\mathrm{HS}}=\sum_{1\leq i<j\leq L}
 \frac{(\pi/L)^2}{\sin^2[\pi(j-i)/L]}
 \mathbf S_i\!\cdot\!\mathbf S_j,
\end{equation}
initialized in the N\'{e}el product state. The physical initial states have bond dimension one. For the adaptive benchmarks, we padded the initial MPS to bond dimension four using entries of magnitude $10^{-10}$
from a fixed random seed and imposed a minimum retained rank of two. Both algorithms started from the same padded MPS. These choices seed additional bond directions and form part of the numerical protocol considered here.  We use identical direct MPO tensors, relative discarded-weight truncation with $r_{\min}=2$, the same cap and tolerance, and the same adaptive Lanczos routine with maximum dimension 25 and tolerance $10^{-12}$. 
Independently assembled sparse Hamiltonians propagated with an exact reference supply the reference states.

The direct Ising MPO has bond dimension three.  The exact finite-state-machine Haldane-Shastry MPO has maximum bond dimension 47 and agrees with the independently assembled Hamiltonian to relative Frobenius error $2.30\times10^{-16}$.  Timing excludes construction, padding, reference evolution, warm-up, diagnostics, state-vector conversion, and file output.

\Cref{tab:l16_matched} shows that BUG is faster in all eight configurations and in every one of the 24 paired timing repetitions.  Relative to 2-TDVP, BUG reduces runtime by 26.1-30.1\% for TFIM and 8.0-13.8\% for Haldane-Shastry.  The work counters explain the direction, but not the magnitude, of this result.  In the sweep schedules used here, BUG performs $2L$ local exponential actions per physical step, whereas 2-TDVP performs $4L-7$; for $L=16$, these correspond to 32 and 57 calls to the common Krylov exponential routine.  In addition, 2-TDVP uses 1.72-1.85 times as many Krylov operator applications.  The long-range MPO and larger MPS ranks increase the relative importance of non-Krylov sweep and compression work, so the wall-clock advantage is smaller there.

Matched thresholds do not imply matched accuracy because BUG performs two global compression sweeps per step while 2-TDVP truncates at every local two-site update.  In the TFIM rows, 2-TDVP has 3.2-9.9 times smaller infidelity.  In the Haldane-Shastry rows, BUG is slightly more accurate at the two larger timesteps and 2-TDVP at the two smaller ones.  The nonmonotone errors at fixed $\epsilon$ reflect the increasing number of truncations as $h$ decreases; these rows are a cost comparison, not a convergence-order study.

\begin{figure*}[t]
\centering
\includegraphics[width=\textwidth]{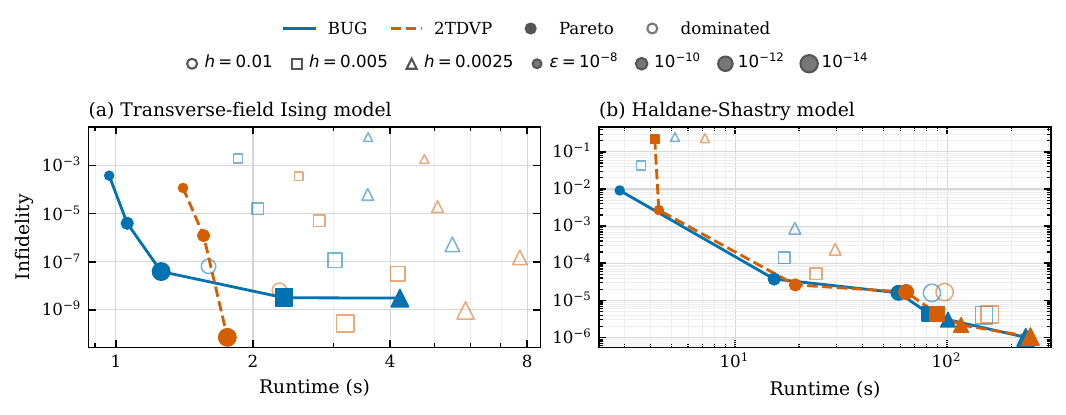}
\caption{\textbf{Runtime-accuracy trade-off at $L=16$.} Infidelity versus median runtime for \textbf{(a)} the transverse-field Ising and \textbf{(b)} Haldane-Shastry models, with $r_{\min}=2$ and $\chi_{\max}=512$.  Marker shape encodes $h$ (circle: $0.01$; square: $0.005$; triangle: $0.0025$), and marker size increases as $\epsilon$ tightens from $10^{-8}$ to $10^{-14}$.  Solid blue (BUG) and dashed orange (2-TDVP) lines connect filled Pareto points; open unconnected markers are dominated configurations.  Each runtime is the median of three repetitions.  Lines guide the eye and do not imply universal dominance.}
\label{fig:runtime_accuracy}
\end{figure*}
\begin{table}[t]
\caption{\textbf{Haldane-Shastry study.} Results at $h=0.005$, $\epsilon=10^{-12}$, and $r_{\min}=2$.  Runtimes are medians of three warmed runs.  The caps 32, 64, and 96 are attained by both methods; 512 is inactive.}
\label{tab:active_caps}
\begin{ruledtabular}
\begin{tabular}{cccccc}
$\chi_{\max}$ & $t_{\mathrm B}$ & $t_{\mathrm T}$ & $t_{\mathrm T}/t_{\mathrm B}$ & $\mathcal I_{\mathrm B}$ & $\mathcal I_{\mathrm T}$ \\
\hline
32 & 23.650 & 24.373 & 1.031 & $1.29\times10^{-4}$ & $1.68\times10^{-4}$ \\
64 & 54.320 & 55.344 & 1.019 & $5.85\times10^{-6}$ & $5.98\times10^{-6}$ \\
96 & 64.274 & 68.549 & 1.067 & $4.40\times10^{-6}$ & $4.46\times10^{-6}$ \\
512 & 82.252 & 89.370 & 1.087 & $4.38\times10^{-6}$ & $4.42\times10^{-6}$
\end{tabular}
\end{ruledtabular}
\end{table}

The trade-off view in \cref{fig:runtime_accuracy} confirms that matched-parameter speedups should not be interpreted as equal-accuracy dominance.  At $h=0.01$ and $\epsilon=10^{-14}$, TFIM gives runtime and infidelity $(1.258\,\mathrm{s},3.90\times10^{-8})$ for BUG and $(1.757\,\mathrm{s},7.19\times10^{-11})$ for 2-TDVP.  At the highest-accuracy Haldane-Shastry point, $h=0.0025$ and $\epsilon=10^{-14}$, BUG gives $(234.245\,\mathrm{s},1.01\times10^{-6})$ and 2-TDVP $(246.583\,\mathrm{s},1.05\times10^{-6})$.

The active-cap results in \cref{tab:active_caps} probe the fixed-resource regime directly.  BUG is faster and slightly more accurate at every tested cap, although the runtime difference is only 1.9-3.1\% at the two tightest caps.  The cap-96 result is already close to the cap-512 accuracy at appreciably lower cost.  We therefore interpret the table as a modest advantage in this long-range example, not as a general result for all fixed-rank problems.

\section{Discussion} \label{sec:discussion}

We formulated BUG as a sequence of canonical MPS sweeps and separated the uncompressed basis update from the compression that controls the bond dimensions. Our analysis identifies when two choices for enlarging the bond basis are equivalent. Enlarging the basis with the moving center produces the same trial space as retaining the previous basis explicitly for every predictor if and only if the coefficient map has full column rank. Transporting coefficients into an updated block basis cannot increase their Frobenius norm. This transport preserves the norm of every coefficient tensor exactly when the updated block space contains the previous block space. These conditions distinguish the two canonical MPS constructions before compression. A later compression can remove directions retained by either construction.

We compared the rank-adaptive implementation with 2-TDVP using the same initial MPS, MPO tensors, truncation parameters, bond cap, and Krylov solver. At matched parameter settings, BUG performs fewer local exponential actions and has lower median runtime in every tested case. The difference is larger for the nearest-neighbor TFIM than for the long-range Haldane--Shastry model. For the latter, the larger MPO and MPS bond dimensions increase the cost of contractions and global compression. The active-cap calculations give similar runtimes and infidelities for the two methods, so the small observed differences do not establish a clear advantage.

Matched parameters do not imply matched accuracy. At these settings, 2-TDVP is more accurate for the TFIM. For the Haldane--Shastry model, the more accurate method depends on the timestep. The runtime versus infidelity curves cross for the TFIM and remain close for the Haldane--Shastry model. Equal local truncation thresholds also do not produce equal accumulated truncation errors because the methods apply different numbers and types of singular-value decompositions. The benchmarks therefore show model-dependent trade-offs rather than a general runtime or accuracy advantage. They also use a padded initial MPS and a minimum retained rank of two. The conclusions are limited to this numerical protocol, and we have not tested their sensitivity to an exact rank-one initialization, the padding scale, or the random seed.

\Cref{app:first_order_transfer} transfers the existing first-order TTN-BUG error bound and the associated norm and energy properties to the uncompressed endpoint schedule under the stated inclusion conditions. This result does not provide an error bound for the compressed implementation used in the benchmarks. It also does not establish second-order accuracy, symmetry, reversibility, or an adjoint composition. Further tests should consider larger systems, other long-range MPO representations, different initialization choices, and comparisons at prescribed target errors. Overall, endpoint-rooted BUG provides a practical MPS time-evolution method, but its performance relative to 2-TDVP depends on the model, accuracy target, and numerical protocol.

\begin{acknowledgments}
This research was supported by the Einstein Research Unit (ERU) on quantum devices at WIAS and conducted jointly with the Technical University of Munich.  Martin Eigel acknowledges partial financial support from the German Federal Ministry of Education and Research (BMBF) under grant agreement No.~13N17160 (Q-ROM - Quantum Read-Once-Memory: Verwandlung von klassischen Daten zu Quantenzuständen). It has also been funded by 
the European Union under the Horizon Europe Programme by the European Research Council projects DA QC (Grant Agreement 101001318) and is part of the Munich Quantum Valley, which is supported by the Bavarian state government with funds from the Hightech Agenda Bayern Plus. The authors acknowledge the use of AI-based tools to support the preparation of this manuscript. All AI-assisted output was critically reviewed and verified by the authors. The authors assume responsibility for all content.
\end{acknowledgments}

\section*{Code availability}
The code supporting this work is available through MQT-YAQS~\cite{YAQS,wille_mqt2024}. 

\bibliography{references}

\appendix
\section{Physical modes at non-leaf nodes}\label{app:non_leaf_bug}
Existing TTN formulations assign physical modes to leaf nodes \cite{Ceruti2023}.  In the chain-shaped MPS of \cref{fig:mps_mpo}(a), every core instead carries a physical leg, so nonendpoint cores are internal nodes with physical modes.  This appendix explains how those legs enter the local Tucker construction.  To make this explicit, let $T$ be the tensor at site $s$ of a TTN $\mathfrak T$.  Its physical leg has dimension $d$, its leg toward the root has dimension $r$, and its $n$ child legs have dimensions $r_1,\ldots,r_n$.  The standard construction treats $T$ as the core of a Tucker tensor with multilinear rank $(r,d,r_1,\ldots,r_n)$ \cite{Ceruti2023}.  The factors associated with the root and physical modes are identity matrices of dimensions $r$ and $d$.  Each remaining factor is the subtree rooted at one child.  Because the physical mode has an identity factor, the first BUG subflow acts trivially on it when $T$ is the orthogonality center \cite{Ceruti2023}.  The effective generator is obtained by contracting the local tree tensor network operator (TTNO) tensor $H^{[s]}$ with the updated child environments and the old environment toward the root.  The compression maps in \cref{eq:alternating_endpoint_update} truncate only virtual legs and retain the physical legs exactly.  Physical modes at internal MPS sites therefore do not obstruct the specialization.

\section{Transfer of the untruncated TTN-BUG error bound}\label{app:first_order_transfer}

We record how the published robust first-order estimate for rank-adaptive TTN-BUG transfers to the uncompressed alternating-endpoint MPS schedule.  The precise rank-adaptive estimate is Theorem~5.1 of Ref.~\cite{Ceruti2023}; as explained there, its analysis combines the fixed-rank TTN proof of Ref.~\cite{ceruti2020timeintegrationtreetensor} with the matrix and Tucker rank-adaptive argument of Ref.~\cite{Ceruti2022B}.  No new general BUG convergence theorem is claimed here.

Let $\Phi_{\delta,\mathrm e}^{R\to L}$ denote the endpoint sweep of \cref{sec:alternating_endpoint_schedule} with explicit retention of the previous block basis at every internal site, exact solution of its local differential equations, and no compression.  Let $\Phi_{\delta,\mathrm c}^{R\to L}$ denote the corresponding center-augmented map.

\begin{proposition}[Reduction to endpoint-rooted TTN-BUG]\label{prop:endpoint_ttn_reduction}
After adjoining an identity factor for every physical leg as in \cref{app:non_leaf_bug}, $\Phi_{\delta,\mathrm e}^{R\to L}$ is the untruncated rank-augmenting TTN-BUG step of Ref.~\cite{Ceruti2023} on the endpoint-rooted comb tree.  If the inclusion and full-column-rank hypotheses in \cref{prop:center_augmentation,prop:overlap_transport} hold recursively at every internal cut, then
\begin{equation}
 \Phi_{\delta,\mathrm c}^{R\to L}(Y)
 =\Phi_{\delta,\mathrm e}^{R\to L}(Y)
\end{equation}
as physical states.
\end{proposition}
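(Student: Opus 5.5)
The proof has two parts. The first is a structural identification: each operation of $\Phi_{\delta,\mathrm e}^{R\to L}$ is matched with the corresponding operation of the rank-augmenting TTN-BUG step of Ref.~\cite{Ceruti2023} on the comb tree rooted at site~1. The second is an induction toward the root, showing that the center-augmented sweep builds the same trial spaces and therefore the same final state.

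\textbf{Part 1: identification with TTN-BUG.} The comb tree has root at site~1. Each node $i$ has one child, namely the subtree on sites $i+1,\ldots,L$, and one physical leg. Following \cref{app:non_leaf_bug}, we adjoin an identity factor for each physical leg. Each site tensor is then the Tucker core of a tensor with factors $(I_r, I_d, \text{child subtree})$. We then check that the sweep reproduces the recursive TTN-BUG step, going from the deepest node toward the root.

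At node $i$, the TTN step first solves the subflow for the child connecting tensor. Its initial value is the old coefficient tensor expressed in the already augmented deeper basis, and its generator is the Galerkin projection onto the embedding $J_i$. By \cref{eq:heff_projection} and \cref{eq:R_update}, this is exactly the local solve \cref{eq:bug_local_update}, and the initial value is $C_t^{[i]}$ obtained via \cref{eq:working_center}.

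The TTN step then augments. It orthonormalizes the concatenation of the old basis (in current coordinates) with the evolved tensor. This is the explicit stack $B_{0,\mathrm{cur}}^{[i]}$ of \cref{eq:explicit_basis_coordinate_transport} together with $\widetilde C^{[i]}$, followed by LQ. Next, the TTN step forms the overlap $\widehat U^\dagger U_0$. This corresponds to $M^{[i]}$ in \cref{eq:bug_overlap_recursion}, up to the convention that bases are rows. Finally, the root Galerkin step with the augmented bases is the final solve at $i=1$.

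The main care in Part 1 is bookkeeping. The TTN formulation evaluates the overlap recursively from the leaves, and \cref{eq:bug_overlap_recursion} does the same. Both transport the recorded $C_{\mathrm{prep}}^{[i-1]}$, which equals the old connecting tensor times the $R$ factors of the preparation pass. This matches the TTN initial data $\mathbf{C}^0$ contracted with the orthonormalized child factor.

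\textbf{Part 2: induction toward the root.} We compare the two maps for $i=L,\ldots,2$.

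The hypothesis at step $i$ is that the two maps have produced identical updated bases on sites $i+1,\ldots,L$, up to a unitary gauge on the bond, and that the inclusion \cref{eq:transport_inclusion} holds there.

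The base case is $i=L$. By \cref{eq:bug_keep_tensor}, the endpoint keeps $Q_{L,\mathrm{old}}^{[L]}$ in both variants, so the two stacks coincide.

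For the inductive step, \cref{prop:overlap_transport} gives $M^{[i+1]}(M^{[i+1]})^\dagger=I$. It follows that $B_{0,\mathrm{cur}}^{[i]}$ is row-isometric and that $C_t^{[i]}=S_iB_{0,\mathrm{cur}}^{[i]}$. Under the assumed full-column-rank condition $\operatorname{rank}S_i=r$, \cref{prop:center_augmentation} gives $\mathcal V_{\mathrm c}=\mathcal V_{\mathrm e}$. Both variants use the same predictor $\widetilde C^{[i]}$, since the working tensor and the environments agree. Hence the two LQ factors $\widetilde Q_L^{[i]}$ differ only by a unitary acting on the new left bond. This is where the main subtlety lies. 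LQ of different stacks with equal row space yields isometries related by such a unitary, provided both factorizations are thin, i.e.\ rank-revealing without padding. If a full factorization pads with arbitrary completion rows, as warned after \cref{prop:center_augmentation}, equality must be stated for the spans actually used. We will therefore assume rank-revealing factorizations.

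Inclusion at cut $i$ then follows from \cref{eq:explicit_old_basis_inclusion}, which closes the induction. At the same time we must show that the unitary gauge propagates consistently. A unitary $U$ on bond $i-1$ turns $M^{[i]}$ into $M^{[i]}U^\dagger$, the working tensor into $C_t^{[i-1]}U^\dagger$, and $\mathcal R^{[i-1]}$ into its conjugation by $U$. Because $\exp(-i\delta H_{\mathrm{eff}})$ is covariant under this change of coordinates, all subsequent local objects are gauge transforms of one another.

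\textbf{Conclusion.} At the root, the final right-block bases differ by a unitary $U$ and $C_t^{[1]}$ transforms covariantly. The root solves therefore produce $\widetilde C^{[1]}U^\dagger$ versus $\widetilde C^{[1]}$ paired with $U\widehat B$ versus $\widehat B$. Contracting each root tensor with its block basis gives the same physical state. I expect the gauge-covariance bookkeeping, together with the padding caveat, to be the main obstacle rather than any estimate.
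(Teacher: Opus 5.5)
Your proposal is correct and follows the paper's own proof. It matches the explicit-retention sweep operation by operation with the comb-tree TTN-BUG step, then runs an induction from site $L$ to the root in which \cref{prop:overlap_transport,prop:center_augmentation} give equal trial spaces and the resulting unitary gauge propagates covariantly through the overlaps, environments, and local exponentials. Your explicit assumption of rank-revealing (unpadded) LQ factorizations and your tracking of $M^{[i]}\mapsto M^{[i]}U^\dagger$ make precise what the paper leaves implicit when it says that two orthonormal bases of the same trial space differ only by a unitary.
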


\begin{proof}
On an endpoint-rooted comb tree, the TTN-BUG recursion first updates the unique nontrivial child subtree and then its connection tensor, proceeding from site $L$ to the root at site~1.  Identity factors make the physical-mode subflows trivial.  At each virtual cut, the TTN construction augments the predicted basis with the previous basis, computes their overlap, and transports the old coefficients into the augmented coordinates.  These are precisely \cref{eq:bug_stack,eq:bug_overlap_recursion,eq:working_center} with explicit retention.  Its final connection-tensor equation is the root solve in \cref{eq:bug_local_update}, which proves the first statement.

Under the stated hypotheses, \cref{prop:center_augmentation} gives equality of the center-augmented and explicitly retained trial spaces at each cut, while \cref{prop:overlap_transport} makes the associated coefficient transport lossless.  Two orthonormal bases of the same trial space differ only by a unitary change of coordinates.  The projected generators are therefore unitarily similar, and exact local evolution and the overlap recursion transform covariantly under this gauge change.  Induction from the endpoint to the root then gives the same physical output.
\end{proof}

Let $A(t)$ solve $\dot A=F(t,A)$ in the full Hilbert space and let $Y^n$ be the uncompressed alternating-endpoint approximation at $t_n=nh$.  Write $\mathcal M^m$ for the MPS rank manifold used in half-step $m$ and $P_Y^m$ for the orthogonal projector onto its tangent space.  We use the hypotheses of the robust TTN-BUG estimate: in a ball containing the relevant exact and numerical states, $F$ is Lipschitz with constant $L_F$, bounded by $B_F$, and
\begin{equation}\label{eq:mps_tangent_defect}
 \left\|F(t,Y)-P_Y^mF(t,Y)\right\|\leq\eta
\end{equation}
uniformly over the half-steps.  For $F(Y)=-iHY$ in finite dimensions, the Lipschitz and local boundedness assumptions hold automatically.

\begin{theorem}[Transferred robust first-order bound]\label{thm:mps_bug_error}
Suppose the hypotheses of \cref{prop:endpoint_ttn_reduction} hold for every half-sweep, no truncation is performed, and the local differential equations are solved exactly.  If $\|Y^0-A(0)\|\leq e_0$, then for $t_n=nh\leq T$,
\begin{equation}\label{eq:mps_bug_error_bound}
 \left\|Y^n-A(t_n)\right\|
 \leq C_0e_0+C_1\eta+C_2h.
\end{equation}
The constants depend on $L_F$, $B_F$, $T$, and the chain tree, but not on small nonzero singular values of the MPS coefficient matricizations.  Thus the time-discretization contribution is first order; convergence to the full-space solution additionally requires $e_0\to0$ and $\eta\to0$.
\end{theorem}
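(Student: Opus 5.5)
By \cref{prop:endpoint_ttn_reduction}, every half-sweep in the uncompressed schedule can be replaced, as a map on physical states, by the explicitly retained map $\Phi_{h/2,\mathrm e}$. The center-augmented and explicitly retained variants agree under the stated hypotheses. The reflected sweep $\Phi_{h/2,\mathrm e}^{L\to R}$ is the same construction on the reflected chain, and reflection is a unitary relabeling. Hence each half-step is exactly one untruncated rank-augmenting TTN-BUG step of Ref.~\cite{Ceruti2023} of length $h/2$, on a comb tree rooted at the left endpoint or the right endpoint. Physical legs enter with identity factors, as in \cref{app:non_leaf_bug}. The proof therefore reduces to applying Theorem~5.1 of Ref.~\cite{Ceruti2023} along a sequence of $2n$ steps whose tree orientation alternates.

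The key step is to extract from that theorem its \emph{local} error estimate rather than its global statement. The robust analysis proceeds in the usual way. For a single step of length $\delta$, started from a state $Y$ on the current manifold, it shows
\begin{equation*}
 \bigl\|\Phi_{\delta,\mathrm e}(Y)-\phi_\delta(Y)\bigr\|\leq c_1\delta\eta+c_2\delta^2,
\end{equation*}
where $\phi_\delta$ is the exact flow. The constants $c_1,c_2$ depend only on $L_F$, $B_F$ and the tree, and not on singular values. This bound is obtained by the recursive subflow argument combined with the exactness property of augmented bases: the previous basis is contained in the new one, which is exactly the inclusion delivered by \cref{prop:overlap_transport}. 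The constants are uniform over trees of the fixed chain shape. For the comb rooted at either endpoint they therefore hold with a common value, because the two trees are isomorphic under reflection and the hypotheses are reflection invariant. The tangent-defect bound \eqref{eq:mps_tangent_defect} is assumed uniformly over half-steps, so $\eta$ is common to both orientations.

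The global bound then follows by the standard Lady Windermere's fan argument. Since $F$ is Lipschitz, the exact flow satisfies $\|\phi_t(Y)-\phi_t(Z)\|\leq e^{L_Ft}\|Y-Z\|$; for $F=-iH$ this flow is even unitary. We telescope over the $2n$ half-steps, propagate each local error with the exact flow, and sum. This gives
\begin{equation*}
 \|Y^n-A(t_n)\|\leq e^{L_FT}e_0+\sum_m e^{L_F(T-t_m)}\bigl(c_1\tfrac h2\eta+c_2\tfrac{h^2}{4}\bigr),
\end{equation*}
which is the claimed form with $C_0=e^{L_FT}$, $C_1\sim Te^{L_FT}c_1$ and $C_2\sim Te^{L_FT}c_2$. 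Boundedness in the ball follows from the usual a priori argument: the numerical solution stays within distance $O(e_0+\eta+h)$ of $A$, or one assumes the ball contains all iterates, as the hypotheses state.

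The main obstacle is the local-error step. Theorem~5.1 of Ref.~\cite{Ceruti2023} is stated for repeated steps on a \emph{fixed} tree, whereas here the root alternates. I would handle this by checking that its proof uses the tree only inside a single step, through the local estimate, and that the global part uses only the Lipschitz stability of the exact flow, which is independent of the tree. If the published local estimate is not isolated in a usable form, I would instead apply the theorem to each single step with $n=1$. This reads off a one-step error bound of order $h\eta+h^2$ from initial data lying exactly on the manifold, and then one reassembles the steps as above.
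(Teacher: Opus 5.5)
Your proposal follows the paper's proof. You reduce each half-sweep to an untruncated TTN-BUG step via \cref{prop:endpoint_ttn_reduction}, and you treat the opposite root as conjugation by the unitary site reversal, so the one-step constants are common to both orientations. You then accumulate the $2n$ local errors of size $c\delta(\eta+\delta)$ through the exact flow; the paper's recursion $e_{m+1}\le(1+c\delta)e_m+c\delta(\eta+\delta)$ with discrete Gronwall is just the recursive form of your Lady Windermere's fan.

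Drop the fallback, though. Theorem~5.1 applied with $n=1$ only yields $C_1\eta+C_2h$ per step, since its statement does not say the constants vanish linearly as $T\to0$, and such per-step bounds do not sum to anything useful over $2n$ steps. The $O(\delta\eta+\delta^2)$ local estimate must therefore be taken from inside the cited proof, exactly as your primary route and the paper do.
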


\begin{proof}
For the left-rooted half-sweep, \cref{prop:endpoint_ttn_reduction} reduces the claim to the robust TTN-BUG estimate of Ref.~\cite{Ceruti2023}.  The stability and consistency argument underlying that estimate is the one in Refs.~\cite{ceruti2020timeintegrationtreetensor,Ceruti2022B}.  Let $\mathcal R$ be the unitary permutation that reverses the site order.  The opposite-root map is
\begin{equation}
 \Phi_{\delta}^{L\to R}
 =\mathcal R^\dagger
  \Phi_{\delta,F_{\mathcal R}}^{R\to L}\mathcal R,
 \qquad
 F_{\mathcal R}(t,Z)=\mathcal R F(t,\mathcal R^\dagger Z).
\end{equation}
Unitary conjugation preserves norms, Lipschitz and boundedness constants, tangent-space defects, and singular values.  Hence the same one-step stability and consistency estimates hold for both endpoint choices.  Applying the accumulation argument of the TTN-BUG proof to the $2n$ alternating half-steps of size $\delta=h/2$ yields \cref{eq:mps_bug_error_bound}.  This is exactly the proof of the cited TTN result with two unitarily equivalent rooted comb trees and uniform constants; no additional factor depends on small singular values.

More explicitly, let $Z^m$ be the numerical state after $m$ half-steps, so that $Z^{2n}=Y^n$, and put $e_m=\|Z^m-A(m\delta)\|$.  The stability and local-consistency estimates in the cited analyses give, after enlarging a uniform constant $c$ if necessary,
\begin{equation}
 e_{m+1}\leq (1+c\delta)e_m+c\delta(\eta+\delta).
\end{equation}
This inequality applies to either root because of the unitary conjugacy above.  Discrete Gronwall over $m=0,\ldots,2n-1$, with $2n\delta=t_n\leq T$, gives
\begin{equation}
 e_{2n}\leq C_T\bigl(e_0+\eta+\delta\bigr),
\end{equation}
and $\delta=h/2$ proves \cref{eq:mps_bug_error_bound}.  Thus the only adaptation is the alternation of two unitarily equivalent rootings; the convergence estimate itself is the published TTN-BUG result of Ref.~\cite{Ceruti2023}.
\end{proof}

For completeness, the same reduction transfers the structure-preserving statement of Ref.~\cite{Ceruti2023}.

\begin{corollary}[Uncompressed norm and energy conservation]\label{cor:uncompressed_conservation}
For $F(Y)=-iHY$ with $H=H^\dagger$, every half-sweep satisfying the inclusion hypotheses and having its root equation solved exactly preserves $\|Y\|$ and $\langle Y,HY\rangle$.  The uncompressed alternating-endpoint composition therefore preserves both quantities.
\end{corollary}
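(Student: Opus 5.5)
We prove the corollary one half-sweep at a time and then compose. By \cref{prop:endpoint_ttn_reduction}, whenever the inclusion and full-column-rank hypotheses hold at every internal cut, the center-augmented half-sweep and the explicitly retained half-sweep give the same physical state. It therefore suffices to treat $\Phi_{\delta,\mathrm e}^{R\to L}$. The reflected half-sweep $\Phi_\delta^{L\to R}=\mathcal R^\dagger\Phi^{R\to L}_{\delta,F_{\mathcal R}}\mathcal R$ is covered by the same argument: $\mathcal R$ is unitary, and $\mathcal R H\mathcal R^\dagger$ is Hermitian with $\langle \mathcal R Y,\mathcal R H\mathcal R^\dagger\mathcal R Y\rangle=\langle Y,HY\rangle$. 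The uncompressed composition is then a product of maps that each preserve both quantities, so it preserves them too.

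For a single left-rooted half-sweep, we first show that the old state is represented exactly in the final trial space. We proceed by induction from site $L$ toward the root. \Cref{eq:bug_keep_tensor} retains $Q^{[L]}_{L,\mathrm{old}}$ explicitly, which gives the base case. \Cref{eq:explicit_old_basis_inclusion} supplies the inductive step, so the previous right-block basis at every cut lies in the updated one. By \cref{prop:overlap_transport}, each transport $X\mapsto XM^{[i]}$ is then lossless, and $B_0=M\widehat B$ at each cut. Composing these identities along the reverse sweep shows that $C_t^{[1]}$, embedded by the final isometric embedding $J_1$ (left environment trivial, right block $\widehat B^{[2]}$), reproduces the old state exactly: $J_1C_t^{[1]}=Y$. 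We use \cref{eq:overlap_projection_error} at each cut, together with the fact that the surrounding canonical tensors are isometric, so that zero coefficient loss means zero state loss. This step is the main obstacle. It requires the telescoping of the recursive overlaps $M^{[i]}$ to agree with the global block overlap, that is, $C_{\mathrm{prep}}^{[i-1]}M^{[i]}$ must be the coefficient of $Y$ in the new coordinates. I would verify this by writing $B_{0,\mathrm{cur}}^{[i]}$ as in \cref{eq:explicit_basis_coordinate_transport} and checking that $C_{\mathrm{prep}}^{[i]}=S_iQ^{[i]}_{L,\mathrm{old}}$ is carried consistently.

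It remains to treat the root solve. The new state is $J_1\exp(-i\delta H^{[1]}_{\mathrm{eff}})C_t^{[1]}$, where $H^{[1]}_{\mathrm{eff}}=J_1^\dagger HJ_1$ by \cref{eq:heff_projection}, and $H^{[1]}_{\mathrm{eff}}$ is Hermitian. Since $J_1$ is an isometry, the norm becomes $\|\exp(-i\delta H^{[1]}_{\mathrm{eff}})C_t^{[1]}\|_F=\|C_t^{[1]}\|_F=\|Y\|$. For the energy, we have $\langle J_1C,HJ_1C\rangle=\langle C,H^{[1]}_{\mathrm{eff}}C\rangle$. The unitary $\exp(-i\delta H^{[1]}_{\mathrm{eff}})$ commutes with $H^{[1]}_{\mathrm{eff}}$, so this quantity is unchanged by the solve. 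At $\tau=0$ it equals $\langle Y,HY\rangle$ because $J_1C_t^{[1]}=Y$. Only this final Galerkin solve matters, because the non-root predictors enter solely through the bases and never through the retained coefficients. This proves the claim for each half-sweep and hence for the alternating composition.
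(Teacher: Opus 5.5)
Your proposal is correct and follows the paper's proof. Inclusion makes the transported root coefficient exact, $J_1J_1^\dagger Y=Y$, and the root update $J_1\exp(-i\delta J_1^\dagger HJ_1)J_1^\dagger Y$ then preserves norm and energy because the exponential is unitary and commutes with the Hermitian $J_1^\dagger HJ_1$; the same argument covers the reflected half-sweep. The step you flag as the main obstacle is in fact immediate: $C_t^{[1]}=C_{\mathrm{prep}}^{[1]}M^{[2]}=C_{\mathrm{prep}}^{[1]}B_{\mathrm{old}}^{[2]}(\widehat B^{[2]})^\dagger=J_1^\dagger Y$ holds by the definition of $M^{[2]}$, so your induction is needed only to establish inclusion at the cut adjacent to the root.
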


\begin{proof}
Let $J$ be the isometric embedding of the final root coefficient tensor into the augmented MPS trial space.  Inclusion makes the transported starting state exact, $JJ^\dagger Y=Y$.  The root update is
\begin{equation}
 Y^+=J\exp(-i\delta J^\dagger HJ)J^\dagger Y.
\end{equation}
The exponential is unitary, which preserves the norm.  Since it commutes with the Hermitian effective Hamiltonian $J^\dagger HJ$, it also preserves its expectation value, which equals $\langle Y,HY\rangle$ before and $\langle Y^+,HY^+\rangle$ after the update.  Applying this argument to both half-sweeps proves the result.
\end{proof}

\section{Six-site test coefficients}\label{app:dense_coefficients}
The asymmetric dense-reference problem in \cref{eq:dense_fixture} is specified by the bond vectors $\mathbf J^\alpha=(J_0^\alpha,\ldots,J_4^\alpha)$ and field vectors $\mathbf h^\alpha=(h_0^\alpha,\ldots,h_5^\alpha)$,
\begingroup
\small
\setlength{\jot}{0pt}
\begin{align*}
 \mathbf J^x &= (0.37, 0.51, 0.29, 0.63, 0.43), \\
 \mathbf J^y &= (0.31, 0.47, 0.39, 0.55, 0.35), \\
 \mathbf J^z &= (0.61, 0.33, 0.49, 0.41, 0.57), \\
 \mathbf h^x &= (0.23,-0.17, 0.31, 0.11,-0.29, 0.19), \\
 \mathbf h^y &= (0.07,-0.11, 0.13,-0.05, 0.09,-0.03), \\
\mathbf h^z &= (-0.07, 0.13, 0.05,-0.19, 0.17, 0.27).
\end{align*}
\endgroup
The five-component bond vectors mean that no bond begins at $i=5$.  Their spatial and component asymmetry makes site-ordering and reflection mistakes detectable.

\end{document}